\documentclass[11pt]{article}
\usepackage[letterpaper,margin=1in]{geometry}
\usepackage[T1]{fontenc}
\usepackage{lmodern,microtype,amsmath,amssymb,amsthm,booktabs,tabularx,array,enumitem}
\usepackage{xcolor,tikz}
\usepackage{xurl}
\usetikzlibrary{arrows.meta,positioning}
\usepackage[colorlinks=true,linkcolor=blue!45!black,urlcolor=blue!45!black,citecolor=blue!45!black]{hyperref}
\hypersetup{pdftitle={Matoi: Selective Revalidation of Consequence Evidence for Agent Authorization},pdfauthor={Zexun Wang},pdfsubject={Bounded runtime authorization, checked consequence evidence and recovery},pdfkeywords={agent authorization, evidence reuse, runtime governance, conservative recovery}}
\usepackage{fancyhdr}
\newcolumntype{Y}{>{\raggedright\arraybackslash}X}
\setlist{nosep,leftmargin=*}
\newtheorem{proposition}{Proposition}

\newcommand{\Tr}{\mathcal{T}}

\newcommand{\Valid}{\operatorname{Valid}}
\newcommand{\Scope}{\mathcal{B}}
\title{Matoi: Selective Revalidation of Consequence\\Evidence for Agent Authorization}
\author{Zexun Wang\\Ond Holdings Inc.\\Toronto, Ontario, Canada\\
\href{mailto:jw@nd.im}{\texttt{jw@nd.im}}\\
Business alternative: \href{mailto:jw@osuite.ai}{\texttt{jw@osuite.ai}}\\
ORCID: \href{https://orcid.org/0009-0009-9098-793X}{0009-0009-9098-793X}}
\date{15 October 2026\\\small Technical preprint --- not peer reviewed}
\begin{document}
\maketitle
\begin{abstract}
Matoi separates reusable consequence evidence from current execution authority
and durable occurrence history. Its finite query-aware contract covers branches,
ordered collections and negative lookups. A consumer reuses only its own checked
evidence whose observations remain valid; signed requests, live grants, quota
and uncertain outcomes remain separate obligations. Conditional correctness
arguments distinguish locally serialized claims from nontransactional remote
effects. Two studies strengthen cost attribution and domain evaluation.
A 1,440-cell matched-wire study repeats 144 authored inputs. With compact input
bytes identical, reuse reduces complete lifecycle cost by 3.49 percent in JS and
1.79 percent in Python. Combined representation/reuse reductions against full
proof are 10.87 and 16.53 percent; five of twelve family intervals for the matched
comparison cross one in each language. A prospectively frozen AgentDojo Banking
mapping covers 5/16 user plans and 7/9 attack plans. Both governed strategies
preserve five exact benign transfers and refuse 35 reused attack-plan/policy
crosses. Pre-satisfied utility predicates are reported separately from completed
work. These are original scripted synthetic tools, not autonomous agents or a
real bank. Earlier studies retain cheaper protected-cache alternatives,
policy-compatible phishing, provider duplication and post-claim revocation.
Separate JS/Python consumers establish finite-profile interoperability within
the same project; durable recovery rechecks evidence and retains uncertain
reservations. The results support a bounded, auditable reuse and execution
contract, not a new general incremental algorithm, arbitrary-tool closure,
atomic third-party commitment or production superiority.
\end{abstract}

\section{Introduction}
Approving a consequential agent action is not equivalent to approving its textual invocation. A support-resolution operation may issue a refund, retrieve a document, and invoke a notification workflow. Its outward name can remain unchanged while a configuration update adds a data export. An authorization decision therefore depends on more than the entry point. It also depends on which consequences the environment can activate and which of those consequences the authority actually granted.

Recent work makes this distinction concrete. Agent Approval Laundering studies the coverage relation between a decision-time approval object and an invocation's transitive execution closure, and proposes Bind--Enforce--Contain~\cite{zhang}. Formation-Consistent Dispatch (FCD) separates effect compatibility from formation-scoped interpreter authority and lifecycle eligibility~\cite{fcd}. These are complementary obligations. We do not claim that adding typed effects, an enforcement step, or a graph is by itself a contribution beyond them.

Matoi addresses repeated use in a changing environment. Whole-snapshot invalidation is easy to explain but can repeat analysis after unrelated changes. Reusing an old approval because the entry identity is unchanged can admit new consequences. Between these extremes, a runtime needs evidence that a previous analysis remains applicable, together with an explicit refusal path when that evidence is missing. Crucially, this is a question about reusing a computation, not a license to cache away revocation, shared quotas or the exact effect that will commit.

\paragraph{Research question.}
Under a declared effect-coverage model, when does consequence evidence remain applicable across runtime state changes, without widening requested authority or replacing current commitment checks, and when is reuse cheaper than reanalysis?

\paragraph{Contributions.}
\begin{enumerate}
\item An applicability contract separating a fixed authority envelope, prospective consequence evidence and current execution obligations. Conditional arguments distinguish receiver-serialized commitment from local dispatch to a nonparticipating remote provider, without equating applicable evidence with current permission.
\item Two explicitly bounded semantic realizations: a straight-line workflow implementation with a read-footprint argument and transactional receiver, and a query-aware finite kernel with checked summary ingestion. A durable extension of the latter reconstructs evidence after restart and preserves uncertain dispatch reservations under a single-history recovery contract.
\item An evidence-separated evaluation of original process execution, finite semantics, real GitHub outcomes and local recovery, followed by a four-strategy, same-kernel qualification. Full-cost, offered-load, retention and fault matrices retain strong-cache ties, refusals and unknowns. A supporting control-placement study tests the complete-mediation premise rather than assuming graph coverage implies deployed enforcement.
\item A separately versioned portable evidence profile, checked by JS and Python consumers, and an explicit durable integration. Their later cost and fault studies preserve the boundaries of the original experiments. A new three-arm matched control holds compact inputs identical while distinguishing complete rule rechecking from accepted-subtree reuse; it measures a modest residual benefit without replacing the historical transport comparison.
\item Two narrow externally authored AgentDojo plan studies, for Slack messages and Banking transfers. The new Banking study reports whole-plan eligibility, exact mappings, actual local mutations, refusal layers and original pre/post utility separately; repeated attack-policy crosses and pre-satisfied predicates are not independent attacks or recovered task utility.
\end{enumerate}
The central empirical lesson is not that dependency caching is always faster. It is that evidence applicability and authorization validity are different obligations with different costs. The earlier four-strategy same-kernel comparison does not establish a net cost advantage over a protected reference cache; the new matched compact-packet control finds a small benefit against full rechecking, not superiority over competent external implementations. A falsely complete deployment inventory defeats a coverage claim. Exact query tracking and validation interpreters are established techniques, not a claimed new algorithm. General dependency discovery, distributed authority and independently validated workflow generalization remain outside the demonstrated result. Successor studies do not retroactively strengthen the semantics or trust boundary of the original measurements.

\paragraph{Reading the paper.}
The refund example explains why an unchanged action may still need a different decision. The formal model then separates what may happen, what is permitted and what was observed. The original studies precede the query-aware kernel, GitHub pilot and durable recovery. Section~\ref{sec:qualification} gives the earlier matched kernel comparison; Section~\ref{sec:qualified-coverage} tests the coverage premise with real queued mutations. Section~\ref{sec:portable-update} adds the later portable evidence and integration studies rather than attributing their results to the original prototypes. Section~\ref{sec:agentdojo} retains the Slack plan study. Sections~\ref{sec:matched-cost} and~\ref{sec:banking} then add the separately frozen matched-cost control and Banking results before the external-validity discussion. Readers interested in deployment should attend especially to commitment ownership, refusal and unknown outcomes; readers interested in optimization should distinguish preparation cost from total workflow time.

\section{Relationship to Existing Work}
PCAA concerns deployer-owned authority and proof-carrying actions~\cite{pcaa}; CAVA concerns canonical action representation across runtime syntax~\cite{cava}. The current OSuite reference path adds exact execution-context commitments, coupled operation/effect/resource/destination constraints, prepared admission, and occurrence-level evidence. BAF and AREG provide useful implementation and research components, but their combination is not itself evidence of novelty.

Zhang et al. explicitly bind source and state evidence before authorization, enforce policy, and contain permitted execution. Their six-class effect projection is an evaluation vocabulary, not an intrinsic limit: the paper explicitly permits destination, argument, credential, severity, timing and outcome refinements and discusses SaaS and nested-agent settings~\cite[Sec.~VII]{zhang}. Our comparison must therefore be against this complete framing, not ``coarse measurement without enforcement.''

FCD already models rich effects, configuration and dependency identities, compatibility evidence, formation-time target authority, and a final-hop fence~\cite[Sec.~IV]{fcd}. A compatible future implementation is not automatically authorized for a previously formed call. Matoi fixes the executable target in its initial fragment and treats reanalysis of permitted state variation separately from authority to replace the interpreter. An extension must preserve FCD's formation and lifecycle obligations rather than bypass them using a broad effect envelope.

FCD also reports reusable reviewed profiles over official tool releases,
a frozen release expansion checked by a separate source oracle, verified-capability
caching, and durable lease markers for uncertain outcomes. These are relevant
comparators, not missing features that Matoi introduces. Our evaluated distinction
is narrower: checked query dependencies under state changes within a fixed finite
interpreter, composed with an explicit recovery contract. We have not established
an advantage over FCD's implementation or its reviewed profile pipeline.

Recent agent-specific work further limits a novelty claim based on execution
gates alone. AgentSpec specifies triggers, predicates and enforcement actions
for runtime constraints~\cite{agentspec}. VEPA studies the verified evolution
and evidence-conditioned activation of formal policies from deployment
feedback~\cite{vepa}. APEX compiles a trusted-task authorization contract and
checks both proposed effects and the endorsed use of runtime information at
execution boundaries~\cite{apex}. Policy learning, information endorsement and
prospective consequence-evidence reuse are distinct questions. Our finite
publication profile does not establish the first two or an attack-success
advantage over these systems. Its comparison needs separately fixed obligations
and workloads, not their reported scores placed beside our conformance counts.

Operational systems also offer more than logs and tool-name permissions.
AgentCore Policy documents deterministic input rules and session-scoped temporal
conditions for approval correlation, action counts and cumulative values;
policy changes can invalidate temporal sessions~\cite{agentcore-temporal}.
Anthropic describes environmental containment, credential separation and the
limits of per-action human approval~\cite{claude-containment}. These are
deployment reference points, not benchmarked baselines in this paper. Neither
external enforcement nor moving work ahead of an action is independently novel.
Any Matoi composition must retain the downstream enforcement and identity
conditions of its actual deployment.

HarnessSecurity-Bench evaluates attack effects separately from legitimate task
utility across coding harnesses and configuration settings~\cite{harness-security}.
That distinction matters here: a blocked consequence can preserve safety while
preventing useful work, and a local oracle match is not evidence of semantic
generalization. The qualification below retains both outcomes and complete
evidence-preparation, dispatch and reconciliation costs; independently defined
real-workflow evaluation remains a separate requirement.

Capability attenuation, proof-carrying code, abstract interpretation, incremental dependency analysis and complete mediation provide relevant foundations~\cite{saltzer,cousot,necula,macaroons,acar}. Their established ideas make a graph-bound token insufficient as a novelty argument. Matoi's propositions are conditional composition arguments, not replacements for those foundations. In particular, authenticating evidence does not establish that its analyzer is sound.

Checked ingestion narrows a trust boundary within the finite implementation; it does not make conventional dependency caches unsound. A trusted analyzer with protected storage is a legitimate alternative, and guarded summaries, validation interpreters and incremental computation remain strong comparators. Reconstructing an exact query footprint supplies no claim to a new algorithm, more permissive semantic equivalence, or completeness for arbitrary programs. Those distinctions require separately fixed workloads and matched implementations, not additional authored repetitions.

Here a guarded-summary baseline means a protected analyzer result with explicit
applicability predicates over its effect-determining inputs. Checking a candidate
result against a reference computation is related to translation validation,
which validates individual translations rather than relying only on a trusted
optimizing translator~\cite{translation-validation}. Our checker compares finite
query observations and effect traces; it does not validate arbitrary compiler
output or establish binary equivalence.

Durable intent and recovery build on established transaction and messaging
ideas. ARIES is a foundational write-ahead recovery method~\cite{aries}, not an
algorithm we claim SQLite implements. Helland distinguishes local transactions
from messaging across independently managed entities~\cite{helland}; the
transactional-outbox pattern durably couples a local change with a message
awaiting publication~\cite{richardson}. Matoi's uncertain-send rule is deliberately
more conservative than retrying an outbox message: without an adequate provider
idempotency/observation contract, it retains unknown rather than sending again.
These mechanisms are foundations and necessary comparators, not implementations
experimentally outperformed by the present work.

Incremental abstraction-carrying code (ACC) is particularly close to the
later repair implementation~\cite{iacc}. Its consumer retains validated
answers and dependency arcs, checks untrusted program/certificate updates,
and propagates affected changes. Generic authenticated data structures also
support queries, updates and cached proof components~\cite{gpads}.
Incremental checking, compact proofs and consumer-owned validated state are
therefore not new individually or merely in combination. Matoi's narrower
subject is an exact finite consequence/query contract under data-snapshot
changes and its composition with live execution obligations. Complete trusted
snapshots remain an assumption, not authenticated remote state discovery.

AgentSpec's predicates include state and trajectory context; a competent
implementation can be extended with fresh consequence evaluation. Its stated
limitation in prospective multi-step analysis motivates that strengthened
comparison, not a claim that it is only a tool-name classifier. AgentDojo's
stateful task/security predicates illustrate another useful design principle:
measure legitimate task completion separately from attacker success~\cite{agentdojo}.
Its authored synthetic environments are not production traces, and externally
authored predicates are not external validation of a system run by us.

Usage control already treats ongoing authorization, mutable attributes, conditions and obligations as first-class concerns~\cite{ucon}. Matoi does not claim to introduce continuous authorization or revocation. Its narrower subject is the reuse of prospective consequence evidence while these mutable commitment obligations remain active. The finite implementation also falls short of general usage-control obligations: it enforces a fixed sequence and scalar budget, not arbitrary temporal duties or distributed policy administration.

ZeroGate moves exact-action preparation ahead of dispatch while retaining a durable admission boundary~\cite{zerogate}. That changes where authorization cost is paid, not whether all of it exists. Matoi studies a different reuse question: whether consequence evidence remains applicable after a state change. Combining the two requires both obligations. A prepared pass cannot authorize a changed effect, and applicable effect evidence cannot revive a revoked pass.

\subsection{The comparison unit is an obligation, not a brand}
The proposed contribution is not that other systems lack governance. A useful comparison asks which evidence justifies a particular decision and when that evidence stops applying. Table~\ref{tab:prior-obligations} identifies obligations that survive composition. The rows describe research relationships, not benchmarked implementations or claims that a cited work cannot be extended.
\begin{table}[ht]
\centering\small
\begin{tabularx}{\linewidth}{@{}>{\raggedright\arraybackslash}p{0.21\linewidth}YY@{}}
\toprule
Prior framing & Obligation retained & Question examined here \\
\midrule
Approval laundering & Represent and contain transitive consequences & When may an existing bounded consequence analysis remain applicable after a state change? \\
Formation-consistent dispatch & Effect compatibility does not grant interpreter authority & Can evidence refresh preserve the original authorized target and delegation? \\
Capability attenuation & Caveats restrict rather than manufacture authority & How are a fixed grant and changing evidence kept distinct at execution? \\
Incremental computation & Reuse needs dependency correctness and a cost model & Which reusable computation is evidence, and which live checks cannot be replaced by that computation? \\
Reference monitors & Every covered effect must cross the protection boundary & Which mutation and authority update share the declared serialization point? \\
\bottomrule
\end{tabularx}
\caption{Claim-by-claim comparison. Typed effects, attenuation, dependency tracking and final enforcement are established ideas. The investigated contribution is their explicit applicability contract and measured trade-off, not a claim of inventing those mechanisms.}
\label{tab:prior-obligations}
\end{table}

\begin{table}[ht]
\centering\small
\begin{tabularx}{\linewidth}{@{}>{\raggedright\arraybackslash}p{0.26\linewidth}Y@{}}
\toprule
Evidence design & Trust and deployment obligation \\
\midrule
Trusted full evaluation & Trust a complete local evaluator for each snapshot;
no untrusted producer certificate is required. \\
Protected caching / incrementality & Keep complete dependency tracking,
retained computations and invalidation state inside the trusted boundary. \\
Compact complete rechecking & Accept untrusted references only into the
consumer's checked root, then rederive every resolved rule. \\
Untrusted-producer checked reuse & Protect the same custody; check replacement
rules and current applicability of retained subtrees before accepting the root. \\
\bottomrule
\end{tabularx}
\caption{Trust/baseline comparison. All designs still need live authority and
occurrence checks. Measured comparators are same-project implementations and
internal ablations; no measured external-baseline advantage is established.}
\label{tab:trust-baselines}
\end{table}

\section{Worked Example and Threat Model}
\subsection{One invocation, several different reasons to reconsider it}
Consider a support workflow that refunds 7,300 US-dollar cents and queues a notification to an internal helpdesk. The skill requests a per-refund ceiling of 10,000 cents. The deployer grants that request only for a fixed interpreter, supported workflow language, destination set and declared state variation. Current runtime policy limits the refund to 8,000 cents. A shared ledger separately limits cumulative refunds; the per-operation ceiling is not that shared budget.

The action name remains \texttt{resolve\_case}. The prospective analysis reads the refund amount and the two destinations, including the queued child's destination. It does not read a dashboard theme. The notification's permitted text is bounded in size but not classified for confidentiality in this fragment: an allowed internal destination is not a general data-loss-prevention guarantee.

\begin{table}[ht]
\centering\small
\begin{tabularx}{\linewidth}{@{}>{\raggedright\arraybackslash}p{0.24\linewidth}YY@{}}
\toprule
Change after preparation & Evidence consequence & Authorization consequence \\
\midrule
Dashboard theme changes & Supported dependency projection is unchanged & Bound can be reused; fresh authority and budget checks remain \\
Refund becomes 5,000 cents & Relevant operand changes; recompute the effect & May remain within an explicitly delegated grant; a new occurrence still consumes its own budget \\
Notification points outside & Relevant destination changes & New analysis cannot expand the permitted destination; refuse the uncovered suffix or whole preparation \\
Runtime ceiling becomes 4,000 & Effect analysis can remain identical & The 7,300-cent refund is no longer authorized \\
Interpreter is replaced & Effects might remain identical & Compatibility is insufficient unless the original formation authority permits the replacement \\
Refund ACK is lost & The caller lacks an outcome, not necessarily an effect & Query the committed occurrence; do not infer permission to refund again \\
\bottomrule
\end{tabularx}
\caption{Different invalidation domains in one running example. These are semantic cases, not additional independent experimental observations.}
\label{tab:running-example}
\end{table}

This example separates an unchanged request, an applicable analysis, a valid grant and a still-permitted next effect. They are four different propositions. A safe gate can preserve the first two and nevertheless refuse because the fourth has changed. Conversely, rejecting every theme change confuses an environmental difference with a relevant change.

\subsection{Assets, attacker capabilities and trusted components}
The protected assets are the receiver's refund budget, exact destination constraints and occurrence integrity. The reference threat includes actor requests with altered effects or signatures, duplicate delivery, stale work, and changes to state or authority between preparation and a later effect. The study injects these through an authored schedule; it does not deploy an autonomous adversary or independently sampled attacks.

In the original transactional experiment, the issuer, analyzer implementation, controller-to-owner administration channel, selected interpreter, effect owner, database and operating-system isolation are trusted. The actor implementation is trusted to implement the closed DSL for the prospective-coverage argument. The receiver checks still reject the explicitly injected malformed or out-of-bound actor requests; this is not evidence of protection against an actor with host access to the owner's database or signing key. The later checked-ingestion kernel treats candidate summaries as untrusted data but trusts its reconstruction, shared parser/type checks, immutable snapshots and private storage. Its GitHub composition additionally trusts the credential-bearing dispatcher and provider readback; child environment filtering is not isolation from a malicious process running as the same OS user.

Authority changes are supplied by the trusted administration channel. The experiment does not authenticate a real enterprise administrator or prove that a customer originally intended the signed policy. A signature authenticates a recorded issuer, not the correctness of its judgment. The statement about all effects quantifies only over the configured mediated effect domain. Network traffic to an uninstrumented service, arbitrary binary behavior and side channels are outside that domain.

\subsection{Three distinct success conditions}
\textbf{Prospective coverage} means the supported workflow's intended trace is represented before enforcement truncates it. \textbf{Receiver safety} means every committed covered effect satisfies its current obligations. \textbf{Task success} means the requested workflow completes with an evidenced outcome. A missing notification can preserve safety while failing the task; a blocked hidden export can preserve safety while falsifying the earlier coverage claim. Reporting one of these as all three would make the evaluation circular.

\section{Objects, Authority and Evidence}
\subsection{Semantic identity is not execution identity}
Let $A$ identify the canonical entry action and let $o$ identify a concrete execution occurrence. Repeated executions may share $A$ but must not share an occurrence identity, a consumed nonce, or an unresolved outcome. Let $I$ identify an authorized executable target and $X_t$ the relevant runtime state at time $t$. Neither equality of $A$ nor equality of $I$ entails equality of all activated effects.

\subsection{Authority is a constraint on traces}
Let $S$ denote requested authority, $R_t$ the runtime's current authority, and $L$ the original leased envelope. A vendor-neutral skill contract is one possible source of $S$; this paper does not define or require the collaboration's evolving schema. For the safety argument, the fixed requested and leased languages are nonempty and prefix-closed: an allowed committed trace has allowed prefixes. This does not express liveness requirements such as eventually sending a receipt. Each boundary is a predicate on covered effect traces, with a remaining-budget interpretation when effects have already committed. For a new invocation, the effective boundary is
\begin{equation}
 \Scope_t = S \cap R_t \cap L. \label{eq:authority}
\end{equation}
The runtime may narrow or deny a request. It cannot use a prediction to expand $S$ or an old lease to override $R_t$. A new request or grant is a separate authorization event, not silent reuse.

Per-effect tuples are a useful restricted language but do not describe every trace constraint. Two refunds can each satisfy a \$100 per-operation limit while violating a \$100 aggregate limit. Ordering, delegation depth, multiplicity and cumulative data movement may also matter. Accordingly, the formal object is a trace boundary; the executable model below implements only finite tuples and a conservative scalar aggregate budget.

More precisely, after committed history $h$, the residual of a fixed trace language $Q$ is $Q/h=\{\sigma\mid h\sigma\in Q\}$. Let $R_t[h]$ denote the runtime's permission for future continuations given that history. The remaining boundary is
\begin{equation}
 \Scope_t[h]=(S/h)\cap(L/h)\cap R_t[h].
 \label{eq:residual}
\end{equation}
$R_t[h]$ is not automatically the quotient of a newly written full-history policy. A later runtime revision does not retroactively rewrite previously committed history. A shared budget is evaluated over the relevant joint history, not separately per worker; a continuation's local effects must be checked against that shared residual. The protocol must state which occurrences share a budget and its serialization order.

\subsection{Prospective bounds and retrospective evidence}
Let $\Tr(A,I,X_t;H)$ be the possible covered traces within a declared horizon/domain $H$. A prospective certificate provides a bound $U_t$ with the claimed coverage property
\begin{equation}
 \Tr(A,I,X_t;H) \subseteq U_t. \label{eq:coverage}
\end{equation}
The certificate must identify the analyzer, its supported language, dependency inputs, executable/receiver identities, validity conditions and exclusions. Authentication identifies the issuer; it does not prove Equation~\ref{eq:coverage}.

Observed lineage $O_t$ is a different object. It contains attempted, admitted, confirmed, denied and unknown occurrences with explicit parent/evidence relationships. Existing AREG research bounds reachability over a supplied finite set of occurrences and candidate lineage worlds. Such a bound can expose ambiguity without inventing edges. It cannot, by itself, establish all future behaviors of a program or discover uninstrumented SaaS actions. A forward analyzer or an independently justified containment contract is needed to connect observed relationships to $U_t$.

If multiple interpretations of evidence remain admissible, let $\mathcal{W}_t$ denote those interpretations. A safe automatic decision must establish the boundary for every admissible interpretation. Mixed or unresolved interpretations require review unless a sound conservative upper bound suffices. A missing record is not an empty consequence set. Correlated choices and aggregate budgets require trace-aware bounds; unioning graph edges can overapproximate impossible combinations and cause conservative refusal.

\begin{figure}[ht]
\centering
\begin{tikzpicture}[node distance=7mm and 8mm,box/.style={draw=blue!35!black,rounded corners=3pt,fill=blue!4,align=center,text width=0.40\linewidth,minimum height=13mm,font=\small},arr/.style={-{Latex},thick,draw=blue!40!black}]
\node[box] (skill) {Requested authority $S$\\Requested scope or skill contract};
\node[box,right=of skill] (evidence) {Prospective evidence $U_t$\\Dependencies, coverage, exclusions};
\node[box,below=of skill] (auth) {Current boundary $S\cap R_t\cap L$\\Authority may narrow\\Never silently widen};
\node[box,below=of evidence] (check) {Applicability and containment\\Reuse, reanalyze, or refuse};
\node[box,below=of auth] (commit) {Effect-owner commitment\\Fresh authority, identity, budget, nonce};
\node[box,below=of check] (receipt) {Observed occurrences $O_t$\\Outcome and lineage, including unknowns};
\draw[arr] (skill)--(auth); \draw[arr] (evidence)--(check);
\draw[arr] (auth)--(check);
\draw[arr] (check)--(commit);
\draw[arr] (commit)--(receipt);
\end{tikzpicture}
\caption{Proposed separation of authority, prospective evidence and observed effects. Only an admissible decision proceeds to effect-owner commitment; refusals do not commit effects. Evidence informs a decision but does not grant authority. This is not a claim that current AREG computes a complete prospective closure.}
\end{figure}

\section{Selective Revalidation}
\subsection{A coverage and dependency certificate}
Let $D$ be the set of effect-determining dependencies declared by a certificate $C_0$. Let $X|_D$ be the state projected onto those dependencies. A reusable certificate requires the conditional stability property
\begin{equation}
 X_t|_D = X_0|_D \ \land\ \Valid(C_0,t)
 \quad\Longrightarrow\quad
 \Tr(A,I,X_t;H) \subseteq U_0. \label{eq:stability}
\end{equation}
Here validity includes the analysis profile, exclusions, authenticated provenance and any externally checked freshness assumptions. Equation~\ref{eq:stability} is a requirement on the analysis and dependency model, not something a hash comparison proves. The mathematical premise is equality of projected values. Implementing it with digests additionally assumes faithful canonical encoding and resistance to relevant hash collisions; signatures require the usual unforgeability and key-custody assumptions. Changing an input outside $D$ is irrelevant only if this property is justified.

A proposed lease commits to $(A,I,L,C_0,X_0|_D)$ and to the issuer/audience, holder/session, policy and authority generations, lifetime, budget and consumption identity. This is an additional research commitment, not a change to CAVA's existing action identity. A whole-environment hash can remain a conservative fallback, but should not be called selective revalidation.

Implementation should separate an immutable grant from replaceable evidence. The grant fixes the formation-authorized target set, requested and granted envelope, and permitted state variation; a refreshed certificate references that same grant. Reanalysis can update evidence inside this delegation but cannot extend the grant, authorize a replacement executable or reinterpret exact-configuration approval as broader delegation.

\subsection{Validity is obligation-specific}
Effect dependencies alone cannot key a cached authorization decision. Write $D_{\mathrm{eff}}$ for the inputs on which the covered effect bound depends. Target authority, current runtime policy, issuer-key lifecycle, remaining quota and occurrence state have separate validity conditions. Equality on $D_{\mathrm{eff}}$ establishes at most the applicability of effect evidence; it says nothing about the other conditions.

A dependency certificate must therefore identify its \emph{claim}, not just a list of keys. For effect evidence, an appropriate abstract record is
\[
 C=(\mathit{claim},\mathit{analyzer},\mathit{language},I,H,
     D_{\mathrm{eff}},X_0|_{D_{\mathrm{eff}}},U_0,\mathit{validity}).
\]
Its authenticated reference points to the fixed grant. No actor-supplied ``verified'' bit substitutes for checking that reference and its supported profile. A certificate for a weaker language, different interpreter or incomplete observation horizon cannot be accepted because its effect tuples happen to look familiar.

\begin{table}[ht]
\centering\small
\begin{tabularx}{\linewidth}{@{}>{\raggedright\arraybackslash}p{0.25\linewidth}YY@{}}
\toprule
Obligation & Potentially reusable fact & Fact that must remain applicable at use \\
\midrule
Effect coverage & Result of supported analysis & Relevant state, workflow, target and coverage profile \\
Grant authenticity & Verification of immutable signed bytes & Trusted issuer and applicable key lifecycle \\
Authority & Original bounded delegation & Current restriction, revocation and applicable lifecycle \\
Quota and order & Declared ceilings and trace rules & Shared residual budget and committed prefix \\
Replay integrity & Stable occurrence identity & Durable consumption state and unresolved prior outcome \\
\bottomrule
\end{tabularx}
\caption{A cache hit is not an authorization decision. The prototype exercises a subset of these lifecycle conditions; it does not implement an enterprise key-rotation or distributed-revocation protocol.}
\label{tab:validity-domains}
\end{table}

Value equality also needs a declared meaning. For the closed DSL, an operand changing from $a$ to $b$ and back to $a$ restores the same instantaneous effect semantics if no relevant effect intervened. It does not restore a consumed occurrence, refund a spent budget, or prove that a previously revoked authorization has been reissued. History-sensitive obligations require durable history or monotone generations, not merely a hash of the final values. In the original straight-line prototype, reactivation or restoration of matching target/state fields can make an otherwise valid outstanding lease usable again. The active-state check enforces current eligibility, not permanent invalidation of an issuance generation. This restoration behavior is a source-level limitation, not an additional measured schedule or a revoke--regrant epoch protocol. In the distinct checked-summary store, restoring observations can similarly restore evidence applicability without restoring authority; forgetting a handle is not revocation of a grant.

\subsection{Decision procedure}
The initial fragment allows state variation inside a fixed, explicitly delegated executable and effect envelope. An approval that instead binds one exact configuration must not be treated as this broader delegation.
\begin{enumerate}
\item Validate action and formation-authorized target identity, current authority, expiry, revocation, unresolved prior outcomes and unconsumed invocation identity.
\item If coverage provenance or required dependency evidence is missing, refuse automatic execution. The reference outcomes distinguish review from explicit denial; neither commits an effect.
\item If Equation~\ref{eq:stability} remains applicable, reuse $U_0$. Otherwise obtain a fresh $U_t$ under the same supported coverage contract. No analyzer response means review, not an empty bound.
\item For a new invocation check $U_t\subseteq\Scope_t$, or an explicitly sound sufficient test. For an admitted continuation, compare its supported remaining bound with Equation~\ref{eq:residual}. Account for the remaining aggregate budget, not just individual effect tuples. Uncertainty that prevents establishing inclusion does not grant permission.
\item At each covered effect owner's commitment boundary, validate the applicable state/authority generation and atomically enforce scope, nonce and remaining budget. Record the resulting occurrence and receipt. Reconcile unknown outcomes before replay.
\end{enumerate}

The last step is not optional. A certificate may authorize preparation or admission without entitling a stale worker to commit later. Where a receiver cannot participate in generation fencing or fresh authority checks, the guarantee must weaken to an explicit revocation window or the path must remain ineligible for automatic reuse. Likewise, ``current state'' requires an observation contract: the analyzer and interpreter must agree on the snapshot used for the occurrence, and any mutable facts relied on at commitment must be checked in the owner's ordered state. Combining independently fresh but mutually inconsistent reads is not a coherent snapshot.

Budget enforcement here is an owner obligation, not a requirement to debit the
same cost twice. The original receiver charges at its atomic mutation boundary.
The later durable profile in Section~\ref{sec:durable} instead debits available
budget at admission and validates the retained reservation at claim. Fresh
authority and boundary checks remain necessary in either realization.

The following receiver proposition applies when authority updates and the actual mutation share that serialization order, as in the original SQLite effect owner. The GitHub pilot in Section~\ref{sec:saas-pilot} has a different boundary: it orders local authorization and intent immediately before transport dispatch, not atomically with GitHub's commit. A local revocation cannot recall an already sent request; no bound on the remote commit interval is established. Its observations therefore support a dispatch contract, not an instance of the receiver proposition for third-party storage.

\begin{proposition}[Conditional receiver non-widening]
Assume complete mediation of every covered effect, a sound check of the residual requested and leased trace boundaries, and binding of each checked concrete operation to the exact mutation committed. Applicable authority updates, shared-budget consumption, duplicate detection and commitment have a common serialization order. Then every committed covered trace prefix is permitted by the applicable skill request and issued lease and satisfies runtime authority at its commitment point.
\end{proposition}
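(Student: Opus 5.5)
The plan is an induction on the length of the committed covered trace, ordered by the common serialization order the proposition assumes. Complete mediation makes every committed covered effect pass through the owner's commitment boundary. So the committed trace is a totally ordered sequence $h=e_1e_2\cdots e_n$ of commitment events. Each $e_k$ is interleaved in that order with the authority updates, budget debits and duplicate checks that precede it. The invariant to establish for every $k\le n$ is that $h_k=e_1\cdots e_k$ lies in $S$ and in $L$, and that $e_k$ lies in $\Scope_{t_k}[h_{k-1}]$, where $t_k$ is $e_k$'s commitment point. The base case is the empty prefix. It lies in $S$ and $L$ because those languages are assumed nonempty and prefix-closed.

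For the inductive step, fix $e_{k+1}$ and its commitment point $t_{k+1}$. Serialization guarantees that the check performed at the commitment point sees exactly the committed prefix $h_k$. It also sees the authority generation and budget state current at $t_{k+1}$: no update or competing commitment can interleave between check and mutation. The sound residual check then gives $e_{k+1}\in (S/h_k)\cap(L/h_k)\cap R_{t_{k+1}}[h_k]$, as in Equation~\ref{eq:residual}. Because the residual is defined by $Q/h=\{\sigma\mid h\sigma\in Q\}$, membership in $S/h_k$ gives $h_ke_{k+1}\in S$, and likewise for $L$. The runtime component holds by definition at its commitment point. Shared-budget consumption is part of the same order, so the residual aggregate budget is evaluated over the joint history $h_k$ rather than a worker-local view. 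Duplicate detection in that order means $e_{k+1}$ is a fresh occurrence and not a replay already counted in $h_k$. The binding assumption transfers all of this from the checked operation to the mutation actually committed.

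The step I expect to be the main obstacle is making precise what ``the committed trace'' and ``the check at commitment'' range over, so that the induction is not circular. Three points need care. First, soundness of the check has to be stated relative to the state at the serialization point, not at preparation time. Otherwise a stale $U_t$ or an old lease could be used, and the reuse in Equation~\ref{eq:stability} would appear to carry authority. I would therefore make explicit that reused evidence enters only as the candidate being checked. It is never a premise of the membership conclusion, and the conclusion depends only on the commitment-time check. Second, the runtime claim must be stated pointwise at each event's own commitment point, not for the final $R_t$. A later revision does not retroactively rewrite committed history, so the proposition asserts only ``satisfies runtime authority at its commitment point.'' Third, I would note that the conclusion is confined to the configured mediated effect domain. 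Effects outside that domain, or remote commits ordered only by dispatch as in the GitHub pilot, fall outside the serialization hypothesis and are not covered.
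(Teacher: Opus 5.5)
Your proposal is correct and takes essentially the paper's approach: induction over the serialized durable history starting from the empty prefix, with complete mediation routing each mutation to the owner and the residual check admitting only extensions in $(S/h)\cap(L/h)$. As in the paper, the serialization order accounts for prior effects, updates and competing budget use, binding makes the checked operation the committed one, and runtime permission is checked only pointwise at each linearization point, with no retroactive claim.
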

\begin{proof}
Initially the committed history is empty. Complete mediation routes the next concrete mutation to the trusted effect owner. Its residual check admits only an extension inside $(S/h)\cap(L/h)$; the common serialization order accounts for prior effects, administrative updates and competing budget uses. The checked operation is the operation committed, so induction extends the invariant over the durable history. Runtime permission is checked at that extension's linearization point. This does not imply that a subsequent revocation retroactively makes prior effects nonexistent or that the whole historical trace satisfies a later policy revision.
\end{proof}

This receiver result does not establish prospective coverage. Equation~\ref{eq:coverage} is a separate analyzer obligation, defined on intended covered execution \emph{before} enforcement truncates it. Blocking an unpredicted effect can preserve receiver safety while falsifying the workflow-coverage claim. Neither result promises task completion or rolls back an already committed authorized prefix.

\begin{proposition}[Dependency-scoped reuse]
If Equation~\ref{eq:stability} holds and current boundary/lifecycle checks are retained, replacing fresh analysis by the applicable cached bound preserves the conditional coverage premise. This does not authorize an implementation outside the formation-authorized target set.
\end{proposition}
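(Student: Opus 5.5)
The plan is to prove the statement by unfolding the stability condition (Equation~\ref{eq:stability}) at the point where the decision procedure chooses reuse, and then showing that the retained checks rule out every other route to a widened decision. The proposition has two parts. The first is preservation of the conditional coverage premise, Equation~\ref{eq:coverage} with $U_t:=U_0$. The second is a negative claim about the formation-authorized target set. I would handle them separately.

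For the first part, I fix a use time $t$ at which step~3 of the decision procedure selects the cached bound. Reuse is selected only if $X_t|_D = X_0|_D$ and $\Valid(C_0,t)$ both hold. Validity here includes the analysis profile, exclusions, provenance and freshness assumptions, and the lease commits to $(A,I,L,C_0,X_0|_D)$. Because $A$ and $I$ are the same identities recorded in $C_0$, the antecedent of Equation~\ref{eq:stability} is satisfied, and the implication gives $\Tr(A,I,X_t;H)\subseteq U_0$. This is exactly Equation~\ref{eq:coverage} at time $t$ with $U_t$ instantiated as $U_0$. So the object passed to step~4 has the same coverage guarantee that a fresh analysis under the same supported contract would be required to supply.

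I would then argue that the rest of the pipeline is unchanged. Steps~1, 4 and~5 do not depend on whether $U_t$ was freshly computed or reused. Inclusion $U_0\subseteq\Scope_t$, or the residual form in Equation~\ref{eq:residual}, is still checked against current $R_t$. Commitment-time checks are still applied. Hence every downstream guarantee conditioned on the coverage premise, including Proposition~1 where its hypotheses hold, holds unchanged. I would stress that the conclusion concerns coverage only. Authority, quota and replay validity follow from the retained checks in Table~\ref{tab:validity-domains}, not from the equality on $D$.

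For the second part, I would argue from the structure of the certificate. $C_0$ binds $I$, and Equation~\ref{eq:stability} only speaks about $\Tr(A,I,\cdot)$ for that same $I$. A replacement target $I'\neq I$ therefore falls outside the implication: the cached bound says nothing about $I'$. Step~1 validates formation-authorized target identity against the immutable grant, and evidence cannot extend that grant. So reuse supplies no authority for $I'$.

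The main obstacle is making precise that "reuse" occurs only when the stability antecedent is actually established. In particular, a digest match must be read as projected-value equality, which requires the canonical-encoding and collision-resistance assumptions. I would state these as explicit side conditions rather than prove them, consistent with the paper's remark that a hash comparison does not itself prove Equation~\ref{eq:stability}.
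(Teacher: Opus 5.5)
Your proposal is correct and follows the paper's three-step argument: Equation~\ref{eq:stability} supplies the coverage inclusion for the reused bound, the boundary-inclusion and commitment obligations are unchanged, and target authority is a separate conjunct that effect inclusion cannot discharge. One small imprecision is that the receiver non-widening proposition does not depend on the coverage premise at all, so it survives reuse because its own hypotheses are retained, not because coverage is preserved.
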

\begin{proof}
Equation~\ref{eq:stability} supplies the same required inclusion of possible traces in the selected bound. Boundary inclusion and commitment obligations remain unchanged. Target authority is a separate conjunct, so effect inclusion cannot discharge it.
\end{proof}

These propositions do not establish useful dependency extraction, precise bounds or low cost. A trivial bound can be sound and still refuse almost everything. The substantive research must show a justified, useful trade-off under real mappings and changing workflows.

\subsection{Operational decomposition and refusal ownership}
The algorithm has two distinct owners. Preparation may choose evidence; it cannot commit business effects. A participating effect owner may commit a concrete mutation only after checking its exact tuple and the current residual obligations. The following pseudocode is the receiver-serialized design contract rather than a claim that every listed enterprise lifecycle mechanism is implemented or that a local dispatcher owns a remote provider's transaction.
\begin{enumerate}[label=\textbf{P\arabic*.},leftmargin=3em]
\item Parse the request and verify its fixed grant, authorized target and delegation of state variation. Invalid credentials or a known incompatible target deny; unresolved required facts return review.
\item Validate the cached certificate's issuer/profile and its claim-specific applicability. A mismatch is a cache miss, not permission. If delegated analysis is unavailable, return review rather than an empty bound.
\item On a miss, derive a bound from the supported workflow and current snapshot. Keep the fixed grant unchanged. Reject an unsupported language or unresolved dependency; do not supply values from expected outcomes.
\item Check the bound against requested, leased and current authority with declared units, order and aggregate constraints. Issue an occurrence-specific preparation object only for a supported admissible bound.
\item At each effect, the owner serializes current-state checks, the exact effect match, residual quota/order and durable duplicate detection with the mutation. A successful preparation is not an exemption from this step.
\item Return committed, refused or unknown outcome separately. Resolve an unknown outcome against the original occurrence before considering another execution. A new authorization is an explicit event, not an automatic response to a failed cache lookup.
\end{enumerate}

The refusal path must name the fact it cannot establish. A policy violation is a denial; missing coverage may require review; a timeout after dispatch is an unknown outcome. These conditions must not all become a generic ``try again'' response. In particular, a certificate-valid but quota-exhausted request must not be routed to a faster lane that omits quota, and missing current authority cannot be repaired by a successful cached analysis.

\section{Cost Model and Break-Even Conditions}
\label{sec:cost-model}
\subsection{Account for checking the cache}
Let $A_{j,i}$ be the fresh-analysis cost on request $i$ in strategy $j$, $V_{s,i}$ the whole-snapshot lookup/validation and cache-maintenance cost, and $V_{d,i}$ the corresponding dependency-based cost. Let $M_{s,i}$ and $M_{d,i}$ indicate misses, and $P_s,P_d$ be actual initial preparation costs. In this additive work-accounting model, over $n$ requests,
\begin{align}
 T_f &= \sum_{i=1}^{n} A_{f,i},\\
 T_s &= P_s + \sum_{i=1}^{n}(V_{s,i}+M_{s,i}A_{s,i}),\\
 T_d &= P_d + \sum_{i=1}^{n}(V_{d,i}+M_{d,i}A_{d,i}).
\end{align}
Analysis and validation can be correlated with change type, input size and execution order; analysis timings need not be identical between strategies on the same input. These expressions do not replace conditional miss costs with an unconditional mean or treat a saved analyzer invocation as a fixed number of milliseconds. They exclude final authorization and commitment. Empirical whole-preparation timings include timer/runtime effects and are reported directly, not reconstructed by subtracting independent quantiles.

Dependency reuse improves this preparation objective over whole-snapshot caching exactly when
\begin{equation}
 \sum_i (M_{s,i}A_{s,i}-M_{d,i}A_{d,i})
 > (P_d-P_s)+\sum_i(V_{d,i}-V_{s,i}).
 \label{eq:break-even}
\end{equation}
This is an accounting identity, not a novel optimization theorem. It exposes the empirical question: are the avoided analyses worth the extra validation? In a constant-cost approximation with equal priming, the condition becomes $(h_d-h_s)A>V_d-V_s$, where $h$ is the hit fraction. For comparison against fresh analysis, priming also has to be amortized: $h_d A>V_d+P_d/n$. Neither condition follows merely from $h_d>h_s$.

\subsection{The measured implementation still traverses the input}
In the original straight-line implementation used for the cost study, let $W$ be workflow size, $N$ the size of the supplied state, $D$ the number of referenced keys and $E$ the emitted effect representation. The analyzer defensively copies and validates inert input before constructing effects. Ignoring bounded string costs and digest constants, its work is $O(N+W+E+D\log D)$. A snapshot lookup copies and hashes the workflow and state. The dependency lookup hashes the workflow and calls \texttt{dependencyDigest}, which itself validates and copies the \emph{entire supplied state} before selecting keys.

Consequently, the current dependency hit path is not $O(D)$; it retains an $O(N+W)$ traversal plus dependency selection and hashing. Relevant changes add the miss validation cost to fresh analysis. This is a concrete reason to expect regressions for small workflows with large state, rather than a measurement artifact to remove from the report. The limits of 64 leaves and four queue levels bound this implementation's supported language, not a general constant-time claim.

Trusted immutable snapshot handles or authenticated change indexes might avoid full traversal. They would move work to snapshot construction and update maintenance and introduce new trust and invalidation obligations. Such costs cannot disappear from the denominator, and such mechanisms are not implemented by relabeling the existing cache.

\subsection{What a faster preparation does not imply}
Under a simplified additive model with unchanged remaining cost $G$, an improvement from preparation $A$ to $A'$ gives total speedup $(A+G)/(A'+G)$. Even eliminating all preparation cannot exceed $(A+G)/G$ in that model. Real queueing, network and scheduling effects can violate the additive approximation; whole-workflow measurements are still required. Quantiles cannot be subtracted to estimate stage costs, and closed-loop per-request timings do not establish throughput under an independently scheduled offered load.

\section{Counterexamples and Scope Limits}
\paragraph{An omitted dependency.}
Consider two executions with identical visible action, implementation, certificate and declared dependency digests. A hidden feature flag makes one execution export customer data. A decision procedure seeing only these inputs cannot distinguish them. Authenticating their identical certificate does not repair the missing input. This is a standard indistinguishability argument, closely related to the insufficiency of invocation-only evidence; it is a limitation to test, not our claimed primary theorem. Per-effect authorization at a genuinely mediated export receiver can still block the export, but the earlier claim to have covered the entire workflow is false.

\paragraph{A race after admission.}
At $t_0$ the gate validates state generation $g$. At $t_1$ a configuration update or revocation changes the applicable generation. At $t_2$ an external receiver performs the effect. A check only at $t_0$ cannot guarantee a $t_2$ boundary. The receiver needs an ordered fence or fresh applicable authority. A changed generation may conservatively require renewed admission even if the new effect would ultimately be safe.

\begin{figure}[ht]
\centering\small
\begin{tabularx}{\linewidth}{@{}l*{3}{>{\raggedright\arraybackslash}X}@{}}
\toprule
Path & Preparation/admission & Intervening event & Effect-owner commitment \\
\midrule
Admission only & Validate $g$, issue receipt & State changes to $g+1$ & Stale work may execute \\
Fenced path & Validate $g$, reserve scope & State changes to $g+1$ & Refuse stale generation; re-evaluate \\
Stable fast path & Reuse applicable bound; check authority & No relevant change & Consume once within current scope \\
\bottomrule
\end{tabularx}
\caption{Ordering, not a measured latency chart. ``Fast'' removes redundant analysis, not revocation or effect-owner checks. A receipt attached after an effect is not a preventive fence.}
\end{figure}

\paragraph{Partial execution and asynchronous composition.}
A workflow can commit an authorized refund before a later export is denied. Denying the export does not undo the refund; compensation is an independent policy-controlled action. A downstream job, webhook or second agent requires explicit delegation, occurrence identity and a covered receiver. A union of single-invocation graphs does not establish authorization for AND-joins, cycles, shared counters or unbounded recursion. The first prototype excludes these general cases and bounds its horizon.

\paragraph{Source and binary tools.}
The method does not assume every tool is available as source, but it also does not infer binary compatibility. Exact artifact/config pinning plus receiver-mediated effect constraints may provide a narrower containment path. Remote SaaS internals and an unverifiable vendor attestation cannot be advertised as independently computed closure coverage.

\section{Finite Model and Counterexample Checks}
The accompanying Node model contains 35 named states and four strategies. Fixtures and expected outcomes were written before each corresponding evaluator change by the same engineering effort. They are not independent annotations or held-out real workflows. Authentication, revocation, dependency completeness and final fencing are abstract Boolean/state inputs. No network, signature verification, program analysis, database transaction or external side effect occurs.

The primary fixtures contain six expected allows and twenty-nine non-allows, including eight reviews. Identity-only is an intentionally incomplete ablation. Exact-snapshot invalidation conservatively refuses changed snapshots. Fresh analysis and dependency-scoped reuse share the represented authority, state-variation delegation, identity, budget and final-fence obligations; the latter substitutes an applicable bound when possible. One warm analyzer-outage case intentionally distinguishes their availability.

\begin{table}[ht]
\centering\small
\begin{tabular}{@{}lrrrr@{}}
\toprule
Strategy & Admits & Against label/29 & Refused admissible/6 & Analysis attempts \\
\midrule
Identity-only ablation & 14 & 8 & 0 & 0 \\
Exact snapshot & 3 & 0 & 3 & 0 \\
Fresh analysis & 5 & 0 & 1 & 22 \\
Dependency-scoped & 6 & 0 & 0 & 6 \\
\bottomrule
\end{tabular}
\caption{Outputs of the abstract development model. Non-admission labels include deny and review, not only demonstrated harmful effects. Attempts include unavailable-summary branches; 20 fresh and 5 selective attempts select available summaries. No real analyzer, latency or CPU cost is measured.}
\label{tab:model}
\end{table}

The candidate matches all 35 fixture decisions, and 45 software tests pass after adversarial code-review repairs. The expanded tests first exposed ten failures, including absent identity/lifecycle facts, undelegated configuration variation, and missing commit-time authority checks. The final model explicitly checks post-admission runtime narrowing, expiry and nonce consumption, and includes a two-occurrence aggregate-budget case. These results are consistency checks, not evidence that an analyzer generalizes. The intended distinctions are that unrelated changes need not invalidate a bound, narrowing does not expand authority, and post-check changes require the final fence.

One additional broken-premise fixture is reported separately and is \emph{admitted} by the candidate while its declared external outcome includes an unauthorized export. Its hidden input is absent from the dependency list and both supplied summaries. The model marks commitment of the supplied bound, not per-effect inspection of an actual receiver; it therefore cannot establish safety for the hidden effect. Retaining this example makes the coverage assumption visible and motivates independent dependency discovery and mandatory receiver mediation. It must not be hidden inside a ``100\% safe'' headline.

\section{Original Straight-Line Implementation}
\label{sec:prototype}
The original Matoi prototype realizes the applicability contract through one narrow straight-line executable semantics. Figure~\ref{fig:matoi-path} depicts this implementation and its receiver-evidence bridge, not every later experiment. The macOS/Azure campaigns, receiver-evidence campaign and preparation-cost measurements use this implementation; they do not use the branch/collection kernel or checked ingestion introduced in Section~\ref{sec:checked-semantics}. Its unit of reuse is a supported consequence analysis; its unit of enforcement is a concrete occurrence at an effect owner. These units must remain distinct even when they share a canonical action identity.

\begin{figure}[ht]
\centering
\begin{tikzpicture}[box/.style={draw=black!45,rounded corners=5pt,align=center,text width=0.40\linewidth,minimum height=15mm,font=\small},arr/.style={-{Latex},thick,draw=black!55},node distance=10mm and 9mm]
\node[box,fill=blue!5] (action) {Identify the requested action\\CAVA canonical identity};
\node[box,fill=green!6,right=of action] (derive) {Derive or reuse its consequences\\Workflow, snapshot, read footprint};
\node[box,fill=yellow!9,below=of derive] (lease) {Bind evidence to a fixed grant\\Exact tuples and bounded lease};
\node[box,fill=blue!5,left=of lease] (owner) {Check and commit at the owner\\Current authority, order, budget, nonce};
\node[box,fill=green!6,below=of owner] (evidence) {Reconcile what actually committed\\Receiver observations into AREG};
\node[box,fill=black!3,right=of evidence] (refuse) {Insufficient evidence or authority\\Refuse automatic execution\\Review is not permission};
\draw[arr] (action)--(derive);
\draw[arr] (derive)--(lease);
\draw[arr] (lease)--(owner);
\draw[arr] (owner)--(evidence);
\draw[arr,dashed] (lease)--(refuse);
\end{tikzpicture}
\caption{The Matoi reference path. Actual experiments use CAVA canonicalization, the BAF tuple matcher and the AREG graph builder. Signed grants and receipts instantiate the authority/proof role; this is not a claim that every PCAA service or production ZeroGate integration was exercised. Dashed flow denotes refusal without a new effect.}
\label{fig:matoi-path}
\end{figure}

\subsection{Closed workflow semantics and a read-footprint lemma}
The prototype accepts a versioned JSON language with refund, notification, ordered sequence and bounded queued-child operations. Values are literals or named state references. Refunds use positive integer US-dollar cents; notification destinations are exact strings. Unknown operations, missing references, malformed input, more than 64 leaves and more than four nested queues produce review. The accepted language has no arbitrary code, conditional hidden callbacks, state-writing expressions or undisclosed service internals.

Let $V_x(e)$ evaluate an expression against an immutable snapshot $x$. A literal reads no state; a reference reads its named key. Let $D(w)$ be the union of keys in all operand expressions of workflow $w$, including queued descendants. Let $F_x(w)$ concatenate the resulting concrete effect tuples in execution order, flattening queued children at the point where the parent waits for their completion. Enqueueing itself is a scheduling event, not a business effect. The interpreter and analyzer are separately implemented, but share this declared language specification.

\begin{proposition}[Read-footprint stability in the closed fragment]
For an accepted workflow $w$ and valid snapshots $x,y$, if $x|_{D(w)}=y|_{D(w)}$, then $F_x(w)=F_y(w)$.
\end{proposition}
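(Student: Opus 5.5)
The plan is structural induction on the syntax of accepted workflows, following the grammar of the closed fragment: literal and reference expressions, refund and notification leaves, ordered sequences, and bounded queued children. Each case establishes a slightly stronger statement. For every subterm $u$ of $w$ and every pair of valid snapshots agreeing on $D(u)$, the effect list $F_x(u)$ equals $F_y(u)$. Since $D$ is defined compositionally as a union over subterms, the hypothesis $x|_{D(w)}=y|_{D(w)}$ implies $x|_{D(u)}=y|_{D(u)}$ for every subterm $u$. This restriction monotonicity is the first lemma I would record.

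The base layer concerns expressions. For a literal $e$, $V_x(e)=V_y(e)$ trivially because no state is read. For a reference to key $k$, we have $k\in D$, so $V_x(e)=x(k)=y(k)=V_y(e)$. For a refund or notification leaf, the emitted tuple is a fixed function of the operation tag and its evaluated operands, so equal operand values give equal singleton effect lists. Acceptance is needed at this step. Because $w$ is accepted, every reference resolves in valid snapshots and every value satisfies the type constraints (positive integer cents, exact destination strings). The interpreter therefore never reaches a review branch on $x$ but not on $y$. I would argue that validity of a snapshot together with acceptance of $w$ makes evaluation total on both snapshots. Any remaining value-dependent check, such as positivity, is itself a function of the read keys, so it yields the same outcome on $x$ and $y$.

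For composite nodes, an ordered sequence $w_1;\dots;w_m$ has $F_x$ equal to the concatenation of the $F_x(w_j)$ in order, and the induction hypothesis gives equality termwise. A queued child is flattened at the parent's wait point. Its contribution is therefore $F_x$ of the child, and the enqueue event contributes no business effect. The bounds of 64 leaves and four queue levels guarantee well-founded recursion, although structural induction on finite syntax already suffices.

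The main obstacle is not the induction itself. It is confirming that $F$ depends on $x$ only through $V_x$ applied to operand expressions. Nothing else in the semantics may consult the snapshot: no control flow may branch on state, no queue scheduling may depend on state, and no validation may inspect keys outside $D(w)$. I would discharge this by inspecting the declared language specification. The fragment has no conditionals, state-writing expressions, or callbacks, and flattening order is fixed syntactically. I would state explicitly that the proposition concerns the declared semantics $F$, not a separately implemented interpreter. Agreement between the interpreter and analyzer is the shared-specification assumption, not part of this proof.
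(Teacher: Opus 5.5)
Your proposal is correct and follows essentially the same route as the paper's proof: structural induction over literals, references, leaves, sequences and queued children, with each child's footprint contained in the parent's union $D(w)$, and rejected inputs excluded rather than treated as an empty trace. Your added observations, that restriction to $D(u)$ is monotone for subterms and that any value-dependent review check depends only on the projected keys, spell out details the paper's short argument leaves implicit.
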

\begin{proof}
Literal evaluation is state independent. A reference has the same value under equal projected snapshots. Therefore every operand of a leaf has the same value, so its effect tuple is unchanged. Sequence preserves equality by concatenation. The queued-child case applies the same argument recursively and includes the child's footprint in the parent's union. Structural induction completes the argument. Rejected syntax has no such derivation and is not interpreted as an empty trace.
\end{proof}

For this straight-line fragment, the prospective analyzer constructs $F_x(w)$ exactly under the declared snapshot semantics; it is not discovering effects of arbitrary programs. A separately implemented actor evaluates operands and submits concrete operations, rather than executing a supplied list of predicted effects. Both use the same immutable snapshot for that invocation. The receiver compares signed workflow and dependency commitments against its current state before each effect. A relevant mismatch observed at commitment stops the stale suffix. These current-value checks do not detect a change followed by restoration between observations and do not provide monotone issuance-generation invalidation.

\subsection{Effect-owner serialization and actual mutations}
The trusted controller records a grant and current runtime state, derives evidence and signs a per-occurrence lease using Ed25519. The action fingerprint uses CAVA canonicalization. A separate actor process sends operations over HTTP; an enqueue starts another process. MCP-style and SDK-style inputs use two local request encodings. These are not claims that external MCP servers or vendor SDK distributions were integrated in this experiment.

The effect owner verifies the signature, current executable digest commitment, workflow and relevant-state commitments, expiry, current authority, the exact indexed effect and its original/leased/current tuple boundaries. Tuple checks invoke the existing BAF execution-boundary matcher. Within one SQLite \texttt{BEGIN IMMEDIATE} transaction, the owner enforces sequence, idempotency and a group-wide integer-cent budget and writes the concrete synthetic ledger or notification effect with its receipt. Administrative updates use that same database and serialization order. The checked tuple is the tuple stored as the effect: there is no subsequent unfenced external payment API in the claimed boundary.

Re-delivery of an already committed index returns its prior receipt without another mutation. A deliberately dropped HTTP acknowledgment is reconciled through receipt lookup, not blind re-execution. Revocation between the refund and its queued notification preserves the committed refund and rejects the notification. This exposes partial completion rather than claiming rollback.

The controller, signing key, selected interpreter and owner are trusted. Executable digests are computed from actual source/runtime bytes, but a digest is not remote attestation or a hostile-host sandbox. The lease is a bearer credential within this isolated test transport; production holder authentication and TLS are not established by the loopback HTTP experiment. SQLite durability is configured with WAL and \texttt{synchronous=FULL}; abrupt power-loss recovery is not inferred from ordinary successful transactions.

\subsection{Matched cached comparisons and measurement boundaries}
All three arms use the same analyzer, signed lease format, actor, effect owner, final checks and storage. Always-fresh recomputes evidence on each request. Whole-snapshot caching reuses only an equal full snapshot and refreshes otherwise. Dependency caching reuses an applicable workflow/read-footprint certificate and refreshes on relevant changes. Both caches retain mutable authority checks. Unlike the earlier exact-refusal model, the whole-snapshot primary comparator refreshes evidence after invalidation instead of permanently refusing safe changed workflows.

The development matrix fixes 21 schedules before execution: cold and warm operation, irrelevant and relevant drift, out-of-bound amounts/destinations, unknown/missing input, commit-time revocation/narrowing/target or dependency change, partial execution, lost acknowledgments, duplicate delivery, concurrent budget consumption, invalid signature, actor-effect drift, expiry and downstream expansion. Repetitions and two encodings do not create new independent skill families. No existing collaboration holdout is used.

Per-request selection, signing and elapsed durations are recorded separately from cache-priming setup. Elapsed time includes actor/queue process creation, HTTP and receiver commitment after registration; it does not include provisioning, cache priming or registration. A full lifecycle cost must include these separately reported stages. There is no injected analyzer delay to manufacture a speedup. An independent code path replays the controller journal and verifies concrete commitments, signatures, state changes, budgets and receipts; this is internal cross-checking, not an external party's validation. Source hashes include both the prototype and imported CAVA code.

\subsection{Local real-process results}
The measured macOS run uses Node 22.18.0 and ten repetitions of each schedule/arm/encoding combination. All 154 unit, integration and verifier tests pass. Its 1,260 schedule records include 60 two-occurrence budget schedules, producing 1,320 workflow occurrences: 480 fully complete, 60 commit an authorized prefix before refusal, and 780 commit no effect. The owner records 1,020 committed effects. All declared schedule outcomes match after replay; this is agreement with authored development expectations, not an independently labeled false-allow-rate estimate.

Each arm has 420 schedule records, 340 committed effects and 160 fully completed occurrences when concurrent peers are included. Primary-worker completion alone differs because either concurrent worker may win the shared budget. Reporting that arbitrary winner as accuracy would be misleading. In all 60 concurrent schedules, exactly one 7,300-cent refund consumes the shared 10,000-cent budget; the other is refused. The 60 lost-acknowledgment schedules recover their existing receipts without duplicate effects. The 60 partial-revocation schedules retain the refund but block the queued notification.

\begin{table}[ht]
\centering\small
\begin{tabular}{lrrrr}
\toprule
Arm & Request analyses & Including priming & Effects & Elapsed p95 (ms)\\
\midrule
Always-fresh & 420 & 820 & 340 & 78.161\\
Whole-snapshot cache & 160 & 560 & 340 & 76.973\\
Dependency cache & 140 & 540 & 340 & 77.450\\
\bottomrule
\end{tabular}
\caption{Local development measurements, not a throughput or SLA test. Each arm includes 400 separately timed cache-priming analyses, including unused priming in always-fresh. Elapsed timing begins after priming and registration and mixes completed and refused schedules; stratified values are retained in the artifact.}
\label{tab:localprototype}
\end{table}

Relative to whole-snapshot caching, dependency reuse eliminates 20 of 160 request-time analyses (12.5\%), or 20 of 560 analyses including priming (3.6\%). Against always-fresh the corresponding reductions are 66.7\% and 34.1\%. These different denominators must not be substituted for one another. Overall elapsed p95 is slightly higher than the cached comparator, and no statistically established end-to-end improvement is claimed.

For irrelevant-state drift specifically, selection p95 is 0.040 ms with dependency caching versus 0.079 ms with whole-snapshot caching; elapsed p95 is 77.442 versus 76.675 ms. The tiny analysis savings do not dominate process creation, HTTP and durable commitment. No artificial analyzer delay was inserted. A useful production advantage would require evidence from workloads where justified analysis is material, while retaining the same security obligations and accounting for all preparation costs.

\subsection{Azure Linux reproduction}
The same source and schedule matrix was rerun on an Azure Standard\_D4s\_v3 VM in Central US, zone 1, using Ubuntu 24.04 and Node 22.18.0 (x86-64, Intel Xeon E5-2673 v4 at 2.30 GHz). The initial D2s\_v5 allocation was unavailable and produced no study observations. The replacement VM ran the same 154 tests successfully, followed by 1,260 schedule records and 1,020 committed effects. Full completion, partial completion and zero-effect counts match the local run; which concurrent peer wins can differ.

Linux elapsed p95 was 241.079 ms for always-fresh, 242.746 ms for whole-snapshot caching and 238.182 ms for dependency caching. However, dependency caching's p99 was 258.589 ms versus 255.759 ms for the cached comparator, and summed elapsed time was also slightly higher. These descriptive values do not support a uniform or statistically established speed advantage. They strengthen reproducibility of the stated boundary behavior, not a throughput or production-latency claim.

The source archive and result/database archive were SHA-256 checked after download, and the cloud evidence was replayed on the original machine. Replay uses the recorded Linux interpreter digest without pretending that the macOS binary attests it. Both campaigns together contain 2,520 schedule records and 2,040 synthetic effects; they repeat the same authored family and are not independent domain-validation datasets. Services remained bound to loopback; only SSH from the operator's current egress IP was exposed on the temporary VM.

\section{Receiver-Bound Evidence and Reconciliation}
\label{sec:evidence-bridge}
The prospective list and an exposure graph are not interchangeable evidence.
We implement a Matoi--AREG evidence bridge that projects signed expected
occurrences separately from receiver-confirmed occurrences, using the existing
AREG graph builder rather than a substitute visualization. The bridge tests
an occurrence-evidence interface; it does not infer previously unobserved
consequences or replace the effect owner's enforcement.

\subsection{Confirmation requires a bound receiver observation}
For an effect at index $i$, the bridge's matching key is
\[
 K_i=(\mathit{run},\mathit{root},H(L),i,e_i).
\]
Equality of the entry action alone cannot discharge this match. A trusted
collector opens the stopped owner's SQLite database read-only, reads the
ordered commitment journal and effect table, and signs a domain-separated
snapshot containing the run, root, exact lease digest, owner-source digest,
database digest, journal high-water mark, capture time and evidence. The
issuer and collector public keys, expected identities and source commitment
come from the verifier's trust input, not from actor-controlled packet fields.

The bridge checks the issuer signature, collector signature, snapshot bindings,
ordered journal replay and exact readback before creating a confirmed node.
The separately invoked archive verifier additionally rehashes saved files and
re-reads the database. A manifest digest must be obtained outside the mutable
bundle. Neither a colocated checksum nor an actor's \texttt{verified} flag is
a trust anchor. A signature authenticates the collector's claim; it does not
attest host integrity or prove that no uninstrumented effect occurred.

Repeated delivery of one $K_i$ yields one confirmed occurrence. Identical
effects at distinct indices remain separate occurrences. A new lease cannot
claim an old lease's receipt solely because the root, index and effect match.
Full typed tuples and lease membership are retained as sidecars. The current
owner does not attest physical process ancestry, so the projection creates
no inferred \texttt{causal\_parent} edges from that membership.

\subsection{Snapshot outcomes and verification}
The outcomes distinguish all signed indices confirmed, partial confirmation,
none confirmed, an empty expectation, and unknown evidence. None confirmed
does not mean no hidden action occurred; all expected indices confirmed does
not establish future containment or general task completion. A denied attempt
is recorded separately, never counted as an executed effect. Its reason is
owner-reported: the frozen replay does not independently derive refusal reasons.
AREG's coverage remains explicitly incomplete and containment unestablished.

The development suite passes 45 tests, including a 48-run process integration
test over eight authored schedules, two local encodings and three repetitions.
Its temporary fixtures are test evidence, not an additional retained empirical
corpus. The suite tests partial
revocation, duplicate delivery, lost acknowledgments, cross-lease substitution,
forged or missing evidence, live-WAL rejection and six saved-artifact corruption
classes. Two review findings were reproduced and repaired: wide JSON could
allocate too much before its structural limit, and a coerced non-string digest
could disagree with AREG's node validation. A bounded-heap regression and strict
digest/graph-count checks now cover these cases. These checks support the
implemented evidence interface, not prospective coverage of arbitrary programs
or independently labeled authorization accuracy. Previous measured campaigns
and collaboration hold-outs are unchanged.

A subsequent, separately retained campaign ran the fixed 48-schedule matrix
and was accepted by a separate archive-verifier invocation with an externally
recorded manifest digest. Of 96 signed expected occurrences, the receiver
confirmed 54 synthetic effects. The snapshots contain 24 all-expected-confirmed,
six partial, and 18 none-confirmed outcomes; there were 24 owner-reported denied
attempts and six duplicate-delivery reconciliations, which add no effects.
Normal operation, duplicate delivery, lost acknowledgments and identical leaves
confirmed both effects. Partial revocation confirmed the first and refused the
second; initial revocation, effect drift and relevant drift confirmed neither.
Each schedule is repeated six times across the two encodings and three rounds.
No record failed replay and none had unknown evidence in this valid-snapshot
matrix; malformed and missing observations are exercised separately in tests.
These counts are agreement with authored expectations, not an independently
estimated false-allow rate. A final combined regression passed 223 tests,
comprising the 45 new tests and 178 existing Matoi/AREG tests. The local archive
and digest are identified in the accompanying bridge report; this is not a
new public download, external replication, or production deployment.

\section{Preparation Cost Boundary Study}
\label{sec:cost-study}
\subsection{Question, protocol and measured boundary}
The real-process study establishes a useful negative observation: saved analyzer calls did not establish a total-workflow speedup. We therefore isolate a smaller empirical question: how much does evidence preparation itself cost as the supported workflow and supplied state grow? This second study does not remeasure security or remove live checks from an end-to-end comparator. It measures a computation that all three execution designs perform before the unchanged owner obligations.

The protocol fixes a Cartesian grid of 1, 8, 32 and 64 workflow leaves; 0, 32, 512 and 4,096 base unreferenced state keys; and four change classes, giving 64 cells. Each leaf reads a destination, and alternating refund leaves also read an integer-cent amount. Notification text is a fixed literal. Thus read-key count is $k+\lceil k/2\rceil$ for $k$ leaves; it is not held constant across workflow sizes. The irrelevant-change class adds one unreferenced mutation key, including at base size zero. The archive records actual total and unrelated key counts, not just the grid labels.

Within a cell, two excluded warmup epochs precede twelve measured epochs. Each epoch uses newly created caches and sixteen sequential requests per arm, with the same deterministic inputs for all arms. Arm order rotates with the epoch. Stable requests do not change. Every irrelevant-state request changes the extra unread value; every relevant-state request changes a read refund amount. Workflow-content requests vary zero to four enclosing queued-child levels, retaining the same ordered leaf effects but changing workflow bytes between adjacent requests. These are controlled pure change streams, not observed production change frequencies or a sweep over drift cadence.

The timer includes lookup, actual analysis when required, key construction, cache maintenance and return. Both caches pay real initial preparation. Always-fresh pays no unused priming, improving the accounting over the earlier prototype campaign. The study imports the frozen analyzer and digest functions; its small selector reuses a snapshot key already computed during lookup instead of hashing the entire snapshot a second time on a miss. It is differentially checked against the frozen selector. This avoids manufacturing an advantage against a needlessly inefficient cached comparator.

Input construction, evidence hashing, fresh-result comparison, file writes and final owner execution are outside the timer. After every timed request, the selected result is compared with fresh analysis and the frozen selector. Those checks can influence later allocation, JIT and garbage-collection behavior even though their own durations are excluded. The host is not CPU-pinned or thermally controlled. These conditions limit latency inference and are retained in the environment record.

\subsection{Predictions and accounting}
The structural prediction is not that dependency reuse always wins. A stable stream permits both caches to reuse; an irrelevant-change stream distinguishes their invalidation rules. Relevant changes force both to analyze, while workflow-content changes force reanalysis even though the selected test transformations preserve effects. The latter is a deliberately conservative limitation: this implementation does not prove arbitrary workflow rewrites equivalent.

At sixteen requests per measured epoch, fresh analysis performs sixteen timed analyses and no prime. Each cached arm performs one prime. A hit-only stream therefore needs one analysis including priming; a miss-only stream needs seventeen. Dependency caching is hit-only for stable and irrelevant-state streams. Whole-snapshot caching is hit-only only for stable streams. These call counts follow from the protocol, whereas elapsed cost must be measured. Full-state validation can dominate even a successful dependency hit.

Primary cost is the measured sum of initial preparation and sixteen request preparations, with paired comparisons within the same cell and epoch. Nearest-rank quantiles and all raw timings are retained. Repeated requests are correlated observations, not independent workflow examples. No artificial analyzer delay, adaptive stopping, removed outlier, or successful-request-only denominator is used. The separately coded verifier reconstructs inputs, counts, intervals and summaries without importing the study runner. It still uses the same analyzer for semantic consistency; it is neither an independent ground-truth oracle nor proof that an unsigned timing history could not be fabricated.

\subsection{Measured response surface}
The fixed local campaign runs on an Apple M4 Pro workstation with 24 GiB memory, macOS/arm64 and Node 22.18.0. It records 896 epochs: 128 warmup and 768 measured. The measured epochs contain 36,864 preparation requests, with 6,144 additional warmup requests retained but excluded from summaries. All recorded requests agree with the fresh analyzer and frozen selector, and the archive verifier accepts the complete matrix. Twelve test groups cover protocol construction, real helper invocation accounting, fresh/cached selection, warmup exclusion, and rejection of missing, modified or internally inconsistent evidence. These are internal checks, not an independent authorization-accuracy estimate.

Each arm has 12,288 measured requests. The 36,864-request total excludes the 1,536 initial cached-arm preparations, whose costs are nevertheless included in measured epoch totals. Always-fresh performs 12,288 analyses. Whole-snapshot caching performs 9,984 including 768 primes; dependency caching performs 6,912 including 768 primes. The reductions against these two comparators are therefore 43.75\% and 30.77\%, respectively, for \emph{analysis calls including priming}, not elapsed time. Their different hit patterns do not remove the cost of validating a hit.

\begin{table}[ht]
\centering\small
\begin{tabular}{@{}rrlrrr@{}}
\toprule
Leaves & Unread keys & Change & Fresh & Snapshot & Dependency \\
\midrule
1 & 0 & Stable & 11.096 & 10.724 & 9.502 \\
1 & 1 & Irrelevant & 10.992 & 22.896 & 9.610 \\
1 & 0 & Relevant & 10.315 & 20.909 & 18.633 \\
1 & 0 & Workflow & 14.586 & 27.169 & 20.265 \\
1 & 4,096 & Stable & 1,367.917 & 4,760.271 & 1,486.138 \\
1 & 4,097 & Irrelevant & 1,398.182 & 6,050.510 & 1,499.318 \\
1 & 4,096 & Relevant & 1,390.138 & 6,151.599 & 2,818.651 \\
1 & 4,096 & Workflow & 1,423.495 & 6,110.838 & 1,522.021 \\
\midrule
64 & 0 & Stable & 345.235 & 354.495 & 260.169 \\
64 & 1 & Irrelevant & 350.047 & 748.521 & 257.878 \\
64 & 0 & Relevant & 368.107 & 771.857 & 633.612 \\
64 & 0 & Workflow & 382.250 & 776.000 & 548.797 \\
64 & 4,096 & Stable & 1,857.883 & 5,226.529 & 1,873.596 \\
64 & 4,097 & Irrelevant & 1,884.852 & 7,213.555 & 1,862.463 \\
64 & 4,096 & Relevant & 1,854.703 & 7,238.229 & 3,421.529 \\
64 & 4,096 & Workflow & 1,868.378 & 7,226.711 & 2,108.143 \\
\bottomrule
\end{tabular}
\caption{Preparation microseconds per request: median of twelve epoch totals, each divided by sixteen and including priming. All four change classes at the smallest/largest workflow and base-state corners are shown; the artifact includes all 64 cells. These are not end-to-end durations or independent-quantile differences.}
\label{tab:cost-corners}
\end{table}

Dependency caching's median paired cost is lower than whole-snapshot caching in all 64 cells, but lower than fresh analysis in only 23. In stable and irrelevant-state streams, it is cheaper than fresh analysis in 11/16 and 12/16 cells, respectively. In neither relevant-state nor workflow-content streams does it have a lower median paired cost than fresh analysis. Across individual paired epochs, it is cheaper than fresh in 133/192 stable, 137/192 irrelevant, 0/192 relevant and 1/192 workflow comparisons. These frequencies describe this fixed balanced campaign, not a production workload distribution or statistical significance.

For 64 leaves and one unread mutation key, the median paired dependency/fresh ratio is 0.741 and dependency/snapshot ratio is 0.343: approximately 25.9\% and 65.7\% lower preparation cost. This is a useful reuse regime. At one leaf and 4,096 unread keys with no drift, dependency/fresh is instead 1.071 despite a cache hit on every request. For relevant drift at 64 leaves and 4,096 unread keys, it is 1.844. These losses are consistent with the full-state validation and miss costs identified in Section~\ref{sec:cost-model}; fewer analyzer calls do not make those traversals free.

\begin{figure}[ht]
\centering
\begin{tikzpicture}[x=3.1cm,y=1cm,font=\small]
\draw[draw=black!45,-{Latex}] (0,0)--(2.1,0);
\foreach \x/\label in {0/0,0.5/0.5,1/1.0,1.5/1.5,2/2.0}{
  \draw[black!45] (\x,0.05)--(\x,-0.05);
  \node[below] at (\x,-0.05) {\label};
}
\draw[black!50,dashed] (1,0.1)--(1,3.35);
\node[above,align=center] at (1,3.35) {Fresh-analysis cost};
\filldraw[draw=green!40!black,fill=green!12,rounded corners=2pt] (0,2.5) rectangle (0.741,3.05);
\filldraw[draw=blue!50!black,fill=blue!10,rounded corners=2pt] (0,1.45) rectangle (1.071,2.0);
\filldraw[draw=orange!60!black,fill=orange!12,rounded corners=2pt] (0,0.4) rectangle (1.844,0.95);
\node[anchor=east,align=right] at (-0.08,2.775) {64 leaves, irrelevant drift\\one unread key};
\node[anchor=east,align=right] at (-0.08,1.725) {One leaf, no drift\\4,096 unread keys};
\node[anchor=east,align=right] at (-0.08,0.675) {64 leaves, relevant drift\\4,096 unread keys};
\node[anchor=west,fill=white,inner sep=1pt] at (0.77,2.775) {0.741};
\node[anchor=west,fill=white,inner sep=1pt] at (1.10,1.725) {1.071};
\node[anchor=west,fill=white,inner sep=1pt] at (1.86,0.675) {1.844};
\node[below,align=center] at (1,-0.65) {Dependency / fresh preparation cost\\Lower is better; includes initial preparation};
\end{tikzpicture}
\caption{Three diagnostic regimes from the fixed 64-cell study, shown as median paired epoch-cost ratios. A useful hit saves work, an expensive hit can lose, and a relevant-change miss can pay twice. These selected cells explain the mechanism; they are neither an overall average nor end-to-end speedups.}
\label{fig:cost-regimes}
\end{figure}

\subsection{What the result changes}
The result favors neither unconditional fresh analysis nor unconditional reuse. It motivates choosing a preparation strategy according to measured validation cost, workload size and change cadence, under unchanged authorization obligations. An adaptive selector would require its own frozen training/evaluation split and switching-cost accounting; none is claimed here. A trusted immutable snapshot representation could also avoid repeated full-state validation, but its construction, update maintenance, storage and invalidation costs would have to be measured rather than moved outside the reported boundary.

The original process campaigns and this preparation study answer different questions. The process campaigns retain real signing, current authority and receiver commitment but establish neither a consistent elapsed-time advantage nor a full-lifecycle speedup. The preparation study explains a trade-off but does not execute those final checks. It also does not measure the distinct checked-ingestion traversal below. Multiplying a preparation ratio into an earlier workflow percentile or treating saved GitHub preparation calls as this study's latency savings would be invalid. The evidence supports a bounded design refinement and further matched testing, not a faster production ZeroGate release.

\section{Checked Query-Aware Evidence}
\label{sec:checked-semantics}
\subsection{A distinct finite semantics}
The query-aware kernel is a separate implementation, not an extension silently
substituted into the earlier measurements. Its bounded JSON language supports
symbolic publication, ordered sequences, Boolean branches, ordered collection
iteration, keyed lookup with explicit absence, nested item scopes and queued
child traces. A publication denotes a destination and positive integer byte
count; the semantic kernel transmits no content. A queue denotes an ordered
child trace, not asynchronous process execution. The accepted profile limits
workflow nodes to 256, collection membership to 16, expanded visits to 4,096,
effects to 64 and nested queues to four, with additional parsing, depth and
serialization limits. Unknown syntax, missing data and resource overruns return
review rather than an empty successful trace; a valid empty sequence is distinct.

Unlike the straight-line key set $D(w)$, the active footprint depends on the
execution selected by the snapshot. Write $Q_x(w)$ for its finite map from
queries to observations and $F^q_x(w)$ for its ordered effect trace. Scalar
queries record predicate inputs and operands. Membership queries record the
ordered member identifiers, including an empty result; item queries include
collection, identifier and field, preserving lexical scope. Keyed lookup
records presence or absence of the selected identifier. Thus adding a previously
absent callback invalidates its negative lookup even if every previously read
item is unchanged. Inserting an unrelated key need not invalidate a negative
lookup, whereas insertion or reordering in an iterated collection changes its
membership observation. Unused metadata is not an effect dependency.

The recursive analyzer constructs this exact trace and footprint. A separately
coded stack interpreter evaluates workflow and snapshot inputs, not predicted
effects. Both share parsing, structural validation and primitive type checks.
Exact observation equality is deliberately conservative: changing a selected
operand can cause non-reuse even if the changed effect would still be authorized.
This is a strong query-dependency baseline, not a more permissive semantic-region
algorithm or a discovery procedure for arbitrary program behavior.
This legacy workflow commitment hashes raw UTF-8 source text under a
well-formed-encoding assumption, so formatting-only rewrites invalidate reuse.
It is not the canonical parsed-workflow identity of the later portable profile
(Section~\ref{sec:portable-update}).

\subsection{Checking completeness before retaining evidence}
The original trusted-summary API compares the observations that a caller
supplies; completeness and truthful effects are its preconditions. A summary
with a branch dependency removed can therefore pass while the interpreter
produces an unrepresented external-destination effect. The retained
counterexample violates that trust premise; it does not refute a complete,
protected dependency cache.

Checked ingestion treats the candidate summary as data. A reference traversal
of the issuance workflow and snapshot, implemented by the stack interpreter's
evidence path, reconstructs all active observations and ordered effects without
calling the recursive analyzer or consuming the candidate trace. Ingestion
requires identity/profile validity, exact equality of the query-observation
map, and equality of the ordered effect list. Dependency-list and object-property
order are immaterial; iteration and effect order are not. Even an extra truthful
but unused dependency is refused as non-exact, not classified as malicious.

Only after this check does the store retain the original JSON string behind a
process-private handle. Later validation uses that retained string, not a
caller-edited copy. Cloned, serialized, other-store or forgotten handles are not
applicable. A failed state comparison need not destroy a handle: restoring its
observations can restore applicability. Forgetting one handle neither revokes
other checked copies nor prevents a new ingestion. These are evidence-lifetime
rules, not authority revocation or durable occurrence consumption.

\paragraph{Conditional checked-ingestion contract.}
Assume the accepted deterministic finite profile, the same workflow and actual
interpreter semantics and limits, honest immutable snapshots $x,y$ accepted by
input validation, and correct reference reconstruction, shared parser/helpers,
comparison and private storage. Let $e(C)$ be the ordered effects in candidate
$C$, and $\eta$ a handle issued by successful ingestion at $x$. Under the
workflow-identity and digest assumptions of Equation~\ref{eq:stability},
\begin{gather}
 \operatorname{Check}_{I}(C,w,x)=\eta
 \ \land\ \operatorname{Applicable}_{I}(\eta,w,y)
 \notag\\
 \Longrightarrow\quad F^q_y(w)=F^q_x(w)=e(C).
 \label{eq:checked-reuse}
\end{gather}
This is a conditional exact-trace statement about supported executions, not a
new algorithmic theorem or a proof of arbitrary-program closure.

\paragraph{Argument and trust boundary.}
Successful ingestion equates the candidate with the complete reference map
$Q_x(w)$ and trace $F^q_x(w)$. Applicability establishes that every retained query
has the same observation in $y$. Induction along the selected execution then
preserves literal and queried operands, branch choices, ordered iteration
scopes, and positive or negative lookup choices. An absent lookup retains its
outer scope; a present lookup binds the same item. The same argument recurses
through sequences and queued children, yielding the same ordered trace and
bounded expansion. Unvisited branch operands need not be dependencies when the
selecting predicate is unchanged, but all syntax must remain in the accepted
profile. This discharges the completeness premise only relative to the trusted
reference semantics and inputs. Shared parsing and helpers can share defects;
the interpreter identifier is a supplied label, not an attestation of executing
code. The handle and diagnostic hashes are neither transferable proofs nor
signatures. A compromised checker process or dishonest snapshot defeats these
assumptions. Trace equality does not discharge formation authority, signature
verification, revocation, quota, replay prevention or the race to an actual effect.

\subsection{Developer conformance and its denominator}
The semantic suite passed 69 tests: the first query-aware slice's 35 tests,
seven reference-traversal tests and 27 checked-store tests. Its broken-trust
counterexample remains visible. Checked ingestion refused 12 single-dependency
deletions from one authored compound workflow and six effect-list mutations:
empty, removed, inserted, reordered, changed destination and changed byte count.
These are conformance cases within the suite, not 18 independent attacks.

A finite product examines 256 ordered before/after pairs from 16 states varying
a branch, byte count, callback presence and irrelevant metadata. Forty-eight
pairs are applicable and preserve the reference trace; 208 are exact-query
non-reuses, not authorization denials or demonstrations of harmful effects.
Expected reuse is fixed by the authored semantics rather than inferred from the
checker response, but the evidence path and interpreter share code and input
checks. These are internal developer-conformance results, not independent
semantic labels, and the repeated product is not added to the first semantic
slice as another independent sample. No business side effect or timing benchmark
is part of this semantic suite. Full reference traversal, parsing, comparison
and retention are paid at ingestion; no net cost benefit for this layer has been
measured.

\section{Real SaaS Dispatch and Outcomes}
\label{sec:saas-pilot}
\subsection{Bounded composition with a remote provider}
A frozen pilot applies the query-aware kernel and checked-summary store to
actual GitHub issue comments. Ten authored schedules each run once with fresh
preparation and once with checked reuse, giving 20 cells. Each original work
item proposes two ordered comments of exactly 512 UTF-8 bytes, one for each of
two owner-authorized issue destinations. This is real third-party storage, not
the synthetic receiver of Section~\ref{sec:prototype}, but remains one controlled
workflow rather than customer traffic or independently curated evaluation.

The prototype imports the finite analyzer, separate interpreter, checked store
and the existing pure BAF matcher from
\path{packages/cava-core/src/execution-boundary.js}. A separate actor process
interprets the workflow and snapshot. The credential-bearing dispatcher checks
the submitted effects against the privately retained checked trace, which must
equal the signed lease's effect trace. Leases bind exact body hashes as well as
destinations, ordered effects and byte counts; equal-length text substitution is not authorized by
the symbolic byte-count semantics. Local Ed25519 issuance and the actual BAF
dependency form a bounded composition, not deployment of the complete
PCAA/CAVA/BAF/AREG/ZeroGate pipeline. These leases are not production PCAA
receipts, and the pilot is not a new AREG evidence-bridge measurement.

Both arms retain signature, original-grant, current-authority, revocation,
quota, order and occurrence checks. Administrative changes, reservation and
intent are recorded in a local SQLite journal; GitHub does not participate in
that transaction. Because the connector serializes and paces requests, the
final authority check runs in a synchronous callback after pacing, immediately
before transport dispatch. Each attempted dispatch also checks campaign halt
state. Moving this check before the connector's wait would leave another local
staleness interval. Under the trusted dispatcher and journal assumptions, the
check binds the outbound request to locally applicable permission at dispatch;
it does not establish permission at an atomic remote commit point or recall
an in-flight request.

\subsection{Observed outcomes}
Across the two arms, 24 actor submissions produced 12 admissions, ten denials
and two duplicate responses. Eight admitted work items completed both steps;
four were deliberately partial. The captured records contain 20 dispatched
POSTs and 20 confirmed comments with matching raw bodies. Table~\ref{tab:saas}
separates those effects from refused and reconciled work. There were 22 final
claim checks: 20 dispatches and two second-step refusals following revocation.
The two issue-creation setup effects are excluded from the governed-comment
denominator.

\begin{table}[ht]
\centering\small
\begin{tabularx}{\linewidth}{@{}>{\raggedright\arraybackslash}p{0.30\linewidth}>{\centering\arraybackslash}p{0.17\linewidth}Y@{}}
\toprule
Schedule & POSTs / confirmed comments & Work-item outcome across both arms \\
\midrule
Authorized & 4 / 4 & Two complete two-comment items \\
Irrelevant metadata churn & 4 / 4 & Two complete items; live authority checks retained \\
Revoked before release & 0 / 0 & Two refusals before dispatch \\
Runtime scope narrowed & 0 / 0 & Two whole-item refusals before dispatch \\
Absent callback added & 0 / 0 & Expanded consequences not admitted under old evidence/lease \\
Competing occurrences, one batch quota & 4 / 4 & One complete item per arm; the competing item refused \\
Duplicate delivery & 4 / 4 & One complete item per arm; two duplicate responses, no extra POST \\
Same-length body substitution & 0 / 0 & Hash mismatch refused before dispatch \\
First confirmation suppressed locally & 2 / 2 & One first effect per arm reconciled; no resend or second step \\
Revocation after first remote effect & 2 / 2 & First effects retained; second steps refused \\
\midrule
Total & 20 / 20 & Eight fully complete and four deliberately partial admitted items \\
\bottomrule
\end{tabularx}
\caption{Real GitHub pilot: ten schedules $\times$ two preparation arms, one cell per schedule/arm. Counts combine both arms, whose scheduled outcomes agree. Confirmation includes provider readback after local confirmation loss; denials and duplicate responses are not extra effects.}
\label{tab:saas}
\end{table}

In the confirmation-loss schedule, GitHub accepted the first write but the
prototype suppressed its local confirmation. The result was initially unknown;
provider readback reconciled one matching comment per arm without resubmission
or execution of the second step. Post-effect revocation likewise retained the
first comment and refused the second. No comment was removed to manufacture
atomicity, and these four partial items were not rolled back. The faults were
locally injected around real provider requests, not observations of a GitHub
outage or malfunction.

\subsection{Counted work, verification and inference limits}
Each arm used ten initial primes and eleven preparations; the extra preparation
was for the second quota contender. Duplicate delivery reused an already
prepared request. The fresh arm performed 21 full analysis/check operations.
The checked-reuse arm performed ten full operations and eleven reused
preparations, avoiding eleven repeated analyses without omitting live dispatch
checks. This is counted work, not elapsed time. The workload was intentionally
applicable at preparation, with relevant changes injected afterward; neither a
production hit rate nor a latency ratio follows.

The fresh arm's ten primes were unused because each preparation reanalyzes.
Thus eleven avoided analyses is the difference under this deliberately primed
protocol, not net savings against an efficient unprimed fresh baseline. Removing
those unused primes would leave eleven fresh analyses versus ten reuse-arm
primes plus eleven reuse validations; this is accounting, not an additional
measured experiment. No positive net-cost conclusion follows without charging
the respective validation and lifetime costs.

Offline replay accepted the recorded dispatches and confirmed comments, and
complete paginated issue lists contained no extra or duplicate campaign effect.
A separate post-run readback matched all 20 comments by identity, destination,
author and body. The verifier does not import the dispatcher decision function
or BAF matcher; it separately reconstructs the authored trace, footprint,
journal-order authorization and reservation accounting and checks signatures
and provider records. This is implementation-independent checking within the
same project, not external human labeling or a third-party audit. Provider
readback corroborates stored objects, not historical local policy state.

The experiment establishes neither an all-or-nothing remote batch nor global
exactly-once execution. A marker is not a provider-enforced idempotency key, and
absence in a bounded readback window is not permission to resend. Two issues,
one workflow and one repetition per schedule do not establish generalization
across services. Host integrity, issuer custody, administration and snapshot
capture remain trusted; the account's privileges exceed the connector allowlist.
The queued-child node remains a trace construct. The confirmation-loss schedule
is not a process-restart experiment, and process-private handles and in-memory
signing setup do not constitute a restart protocol. The demonstrated increment
is checked consequence reuse composed with live local authorization and honest
non-atomic outcome handling, not distributed recovery or a full product stack.

\section{Durable Evidence and Conservative Recovery}
\label{sec:durable}
\subsection{A persisted summary is not a recovered authorization}
The query-aware store's private handle is useful within one process but cannot
be serialized into permission after that process disappears. The durable
realization in \path{packages/matoi-runtime/} retains the original workflow,
issuance snapshot, summary bytes, checker version and implementation identity.
Reopening runs checked ingestion again; a saved digest alone is insufficient.
It reconstructs a process-private handle only after the complete footprint and
effect trace pass the same finite checks. Current-state applicability is then
tested separately from authority. The two promoted semantic files are unchanged
from the earlier query-aware implementation; the new contribution is their
explicit lifecycle contract, not a retrospectively enlarged semantic language.

An Ed25519 issuer supplied by the trusted embedding service signs each prepared
occurrence. The envelope binds scope, issuer and implementation, the existing
action fingerprint, actor and session, workflow and summary, ordered effects
and exact body hashes, the original boundary, authority epoch and expiry.
Admission checks the whole envelope and reserves the trace's byte budget in a
SQLite transaction: available budget is debited at admission. Claim validates
the retained reservation and pending step, ordered predecessor completion,
current evidence, body binding and live authority; it is not a fresh
available-budget test and does not debit the reservation again. Identical
duplicate requests return the recorded step
states rather than acquiring another reservation. The journal does not store
the private key; an authorized replacement must obtain the same issuer from
its embedding service. Scope, source-profile or issuer mismatch refuses recovery
rather than silently resetting history.

Authority updates increment an epoch, including reactivation. This is distinct
from a state value changing and later returning to its original value: the latter
can restore evidence applicability, but cannot restore an earlier authorization
epoch. A new grant or occurrence is not inferred from a restart. The actual
embedding service, administrator identity and key custody remain trusted and
are not supplied by a serialized agent request.

\subsection{One local owner, three crash meanings}
The journal serializes state changes and their events using
\texttt{BEGIN IMMEDIATE}, WAL and \texttt{synchronous=FULL}. Explicit recovery
compares the expected owner generation and advances it in the same transaction
that restores evidence and classifies interrupted steps. Every subsequent
mutation checks that generation. This fences an old handle from new journal
operations; it does not recall transport already claimed or sent. State and
event export use one read snapshot, so a concurrent update cannot produce a
report combining earlier state with later history.

There are three importantly different interruption points. Before an intent,
a reserved step is pending and may later pass fresh dispatch checks. After an
intent but before an observed result, it is unknown: the request may have left,
even when the experiment deliberately stops it before HTTP. After a receiver
commit but before local confirmation, that same unknown state may correspond
to a real effect. The recovering process must not distinguish these last two
by guessing from the absence of a local success record.

Recovery therefore converts intent to unknown and cancels the affected
unstarted suffix. Cancellation refunds only that suffix. The uncertain prefix
continues to consume its reservation. A trusted provider readback must match
the exact destination, body, principal and receiver-recorded dispatch-attempt
correlation. Matching body content without the correlation is insufficient.
For a nonconflicted unknown step, one matching observation records a historical
confirmation, even if authority has since been revoked; it does not resume the
cancelled suffix. Zero matches preserve unknown. Multiple distinct matching
records supplied in a single reconciliation batch while the step is unknown
mark a conflict rather than choosing a convenient success. A conflict-marked
unknown step remains unknown even if a later readback
supplies only one exact match; neither confirmation nor reconciliation clears
that conflict, including after recovery. Its reservation remains charged.
Already confirmed steps are not re-audited for later-discovered duplicates or
contradictions; no late-conflict-based blocking of dependent claims is provided.
This is the approved single-batch reconciliation scope, not a later audit
service. The adapter must obtain the correlation
from receiver evidence, not attach the local attempt to an unrelated old object.

\begin{proposition}[Conservative local recovery]
For the supported profile, assume one authoritative, protected, non-rolled-back
and non-forked journal history, trusted implementation and issuer, atomic journal
transactions, and all dispatch attempts using the
claim transition. For each admitted occurrence and step, at most one claim
transitions pending to intent, including across owner recovery. Fix a grant $g$,
with original budget $Q_g$ and available budget $q_g$. Let $\mathcal{S}_g$ contain
the steps of all admitted occurrences rooted in $g$, and let $c(s)$ be step cost.
The journal preserves, independently for each grant,
\[
 q_g + \sum_{\substack{s\in\mathcal{S}_g:\,\operatorname{state}(s)\in\\
 \{\mathrm{pending,intent,unknown,confirmed}\}}} c(s) = Q_g.
\]
Conflict-marked unknown steps remain in the sum. This is a local reservation
and dispatch-claim property, not a bound on effects a provider may generate.
\end{proposition}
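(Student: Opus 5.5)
The argument is an induction over the serialized journal history. Under the single-history and atomic-transaction assumptions, every reachable journal state is obtained from the empty journal by a finite sequence of committed transactions. I would define, for each grant $g$, the quantity
\[
\Phi_g = q_g + \sum_{s\in\mathcal{S}_g,\ \operatorname{state}(s)\in\{\mathrm{pending},\mathrm{intent},\mathrm{unknown},\mathrm{confirmed}\}} c(s),
\]
and prove two invariants simultaneously: $\Phi_g=Q_g$ for every grant, and each step has at most one pending-to-intent transition. Initially no occurrence rooted in $g$ is admitted and $q_g=Q_g$, so both invariants hold. Atomicity means no intermediate state is ever observable, so it suffices to check each transaction kind separately. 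The non-forked, non-rolled-back assumption rules out replaying an earlier prefix that would resurrect a debited budget.

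The case analysis enumerates the journal transitions described in Section~\ref{sec:durable}. \textbf{Admission} debits $q_g$ by the trace's total cost and inserts exactly those steps as pending in the same transaction, so $\Phi_g$ is unchanged. Admission is refused entirely if the budget is insufficient, which leaves $\Phi_g$ untouched. \textbf{Identical duplicate requests} only return recorded states, so nothing changes. \textbf{Claim} moves a step from pending to intent; both states lie in the counted set, and the claim does not touch $q_g$ because it validates the reservation rather than debiting again. \textbf{Confirmation} and single-match reconciliation move intent or unknown to confirmed. \textbf{Conflict marking} keeps the step unknown. Both of these stay inside the counted set. \textbf{Recovery} does two things. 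Converting intent to unknown stays inside the counted set. Cancelling the unstarted suffix removes those pending steps from the sum and credits exactly their cost back to $q_g$, so $\Phi_g$ is again preserved. \textbf{Authority updates} and epoch increments change no step state and no budget. Because every transaction touches a single grant's reservation, the equation holds independently for each $g$.

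For the at-most-one-claim part, the claim transition's precondition is state pending, and it is checked inside the same \texttt{BEGIN IMMEDIATE} transaction that writes intent. No transition returns a step to pending: recovery maps intent to unknown and pending to cancelled. So once one claim commits, any serialized competitor sees a non-pending state and fails. Across owner recovery, the generation fence matters here. Recovery advances the owner generation atomically, and every later mutation checks it, so an old handle's claim cannot commit after recovery. A claim committed before recovery is already recorded as intent and becomes unknown. The assumption that all dispatch attempts use the claim transition then links this to actual sends.

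The main obstacle is exhaustiveness. I must make sure the enumeration covers every state-changing path in the implementation's transaction set, and in particular that no path refunds the reservation of an unknown or conflict-marked step or debits a step twice. I would handle this by listing the transitions as a finite state machine over $\{\mathrm{pending},\mathrm{intent},\mathrm{unknown},\mathrm{confirmed},\mathrm{cancelled}\}$. Each edge would carry its budget delta, and I would verify edge by edge that each delta equals the change in the counted sum. The only edge leaving the counted set is the pending-to-cancelled edge, and its delta exactly compensates. I would state explicitly that the result is relative to this transition table and says nothing about effects the provider generates outside the claim path.
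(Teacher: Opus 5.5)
Your proposal is essentially the paper's own argument. It inducts over the serialized journal with the same transition-by-transition accounting: admission debits exactly the new pending costs; duplicates and claims debit nothing; confirmation, reconciliation, conflict marking and the stop/recovery intent-to-unknown move stay inside the counted set; and only pending-to-cancelled leaves it, with a single compensating refund. It also uses the same no-return-to-pending and generation-fence reasoning for the at-most-one-claim part.

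One small tightening: per-grant independence should rest on each step's cost being debited and refunded against the $q_g$ of the grant it is rooted in. It should not rest on every transaction touching only one grant, because a single recovery transaction may classify interrupted steps belonging to several grants.
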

\begin{proof}
Registration starts with $q_g=Q_g$. Atomic admission subtracts exactly the costs
of newly pending steps; identical duplicates subtract nothing. Claim requires
pending and changes it to intent without a second available-budget debit.
Confirmation and reconciliation can change a nonconflicted intent or unknown
to confirmed without altering cost; a conflict-marked unknown cannot make that
transition. Stop and recovery change intent to unknown, and move only pending
steps to cancelled while refunding
their costs once. No transition restores intent, unknown, confirmed or
cancelled to pending. The generation check excludes later mutations by an old
owner. Each transition preserves the equation and none supplies a second claim
for the same step. Remote execution is outside that serialization and is not
part of the conclusion.
\end{proof}

Restoring an older valid backup or operating copied journals independently
violates the single-history premise. Matching issuer and source identity does
not prove that a database contains the latest consumption history. This package
has no external anti-rollback or fork-detection service.

\subsection{Dispatch is still not remote commitment}
The dispatcher waits for provider pacing before its final synchronous claim,
then invokes the adapter without an intervening await. A send or readback error
leaves uncertainty; there is no automatic mutation retry. The adapter may still
queue internally, and a takeover or revocation may occur after the claim.
Neither a generation token nor SQLite makes that interval atomic with a remote
provider. This realization consequently offers local claim fencing and
conservative observation-based reconciliation, not remote fencing, rollback,
global exactly-once delivery or an all-or-nothing workflow. An unknown result
may remain charged indefinitely; resolving that availability cost needs an
explicit provider/operator contract, not a timeout that grants permission again.

\subsection{Real local process recovery}
The recovery protocol was frozen before implementation and the retained run.
Ten authored schedules each run once in fresh and checked-reuse arms, producing
20 cells with the same authority and history checks. A finite two-publication
workflow has bodies of five and six bytes under an eleven-byte grant. Actual
owner processes and a loopback HTTP receiver use separate SQLite databases;
the controller injects SIGKILL at recorded IPC barriers. There are 32 owner
starts, 20 receiver starts and ten owner kills. This is not external-SaaS restart
recovery, machine power loss or a latency experiment. Quota competition is two
submissions to one active owner, not multiple independent journal authorities.

\begin{table}[ht]
\centering\small
\begin{tabularx}{\linewidth}{@{}>{\raggedright\arraybackslash}p{0.30\linewidth}>{\centering\arraybackslash}p{0.14\linewidth}Y@{}}
\toprule
Schedule & POSTs / effects & Final outcome across both arms \\
\midrule
Ordinary completion & 4 / 4 & Two complete occurrences \\
Pending crash and recovery & 4 / 4 & Fresh checks precede resumed dispatch \\
Intent before HTTP, then crash & 0 / 0 & Two unknown prefixes; suffixes cancelled \\
Receiver commit before confirmation, then crash & 2 / 2 & Two prefixes reconciled; no resend or suffix \\
Revoke after first effect & 2 / 2 & Two second-step authority refusals \\
Duplicate admission & 4 / 4 & Two duplicates; no additional effect \\
Shared quota, two candidates & 4 / 4 & One complete candidate per arm; rival denied \\
Wrong issuer recovery & 0 / 0 & Two refusals; four original steps stay pending \\
Stale owner during pacing & 4 / 4 & Two stale claims fenced; replacements complete \\
Ambiguous receiver readback & 2 / 4 & Two injected provider duplications; conflicts retained \\
\midrule
Total & 26 / 28 & Ten complete cells; four unknown-containing cells \\
\bottomrule
\end{tabularx}
\caption{Local recovery campaign, one repetition per schedule/arm. Effects are actual receiver rows, not all confirmed steps. The duplication fixture deliberately violates the normal one-effect-per-POST provider assumption; detecting its ambiguity is not preventing its duplicate effects.}
\label{tab:recovery}
\end{table}

All 20 cells match the frozen oracle, but only ten contain a complete admitted
occurrence; six meet the stricter clean-completion indicator, which excludes
quota refusals and stale-owner fences within the same cell. Twenty-four
submissions yield 20 admissions, two duplicates and two quota denials.
The 40 admitted steps end as 24 confirmed, four pending, eight cancelled and
four unknown. Two reconciliation conflicts are included in unknown, not extra
steps. The two injected duplicate-producing requests account for four receiver
rows that are not confirmed as successful steps. The intent-before-HTTP cases
retain uncertainty despite having no receiver effect: the recovery path does
not infer that fact from a missing local confirmation.

The independent receiver databases contain 26 POSTs and 28 rows. There are 48
dispatcher calls including no-resend probes, but 28 adapter send entries and
26 actual POSTs; two owners were killed inside the adapter before HTTP. Unknown
prefixes remain charged five bytes and only unstarted six-byte suffixes are
refunded. The fixed expected states, observations, process exits and per-grant
conservation were checked without importing the runtime decision function.

Post-run replay reopens temporary copies of all 40 databases, checks 31 captured
source files and 200 cell artifacts, and verifies 62 signed request records:
42 preparations including primes/contenders, plus 20 admitted copies. Public
SPKI keys are retained, not private keys. The final package's 93 passing tests
include signature/archive corruption negatives and temporary process cases;
they are not added to the retained 20-cell denominator. These are same-project
engineering checks, not independent labels or attestation of the originating
host. The original process and GitHub archives remain separate and unchanged.

\section{Same-Kernel Qualification Against Protected Caches}
\label{sec:qualification}
\subsection{Match the security obligations, not just the workload}
The preceding studies leave a substantial comparator question: does checked
reuse offer an advantage over well-protected evidence produced directly by a
trusted reference interpreter? A successor experiment adds three closed
preparation strategies to the same durable kernel. Reference-fresh derives
evidence for every new occurrence. Reference-snapshot reuses it only after
whole-snapshot equality. Reference-cache reuses it when its exact query
observations remain applicable. Checked-reuse performs full finite derivation
checking on creation or restoration and query validation on reuse. The older
fresh mode, which includes analysis and checking, remains compatible but is
not relabelled as a single reference traversal.

All four primary arms retain the same signature, issuer/profile, original
grant, current authority/epoch, expiry, coupled effect, quota, duplicate,
generation-fencing and final-claim checks. They use the same durable state
machine and provider adapter. Snapshot equality only selects preparation;
it cannot substitute for current authorization. Restored reference records
are checked against their retained issuance inputs under protected-journal
custody, not advertised as independently validated untrusted derivations.
The baseline's store is not made attacker-writable to manufacture a treatment
advantage. These are local implementations, not measurements of FCD,
AgentSpec or commercial governance services.

\subsection{Full-lifecycle cost with no unused primes}
The confirmatory cost protocol fixes twelve authored cases, four arms and
twenty paired fresh-process blocks, yielding 960 cells. It varies occurrence
reuse counts of one, eight and 64, effect counts of one, sixteen and 64,
small/large unread state, used-value changes, branches, ordered collections
and negative lookups. This is a selected response surface, not a complete
Cartesian product or independent real-tool corpus. Every new occurrence has
a distinct identity; duplicate probes are not counted as reuse. Arm order
is seed-randomized within a block, and there is no warmup or free prime.

Protocol and imported source hashes are retained and printed before the
first cell. An earlier v1 collection intended as confirmatory was stopped
after 82 complete cells when review identified insufficient evidence-oracle
checks. That partial archive and all pilots are retained but excluded. The
v2 protocol was frozen before the successor collection, not selected after
inspecting its performance. This is a local pre-data freeze, not external
preregistration or independent annotation.

The host is an Apple M4 Pro with fourteen logical CPUs and 24 GiB memory,
Darwin 27.2.0/arm64 and Node 24.19.0. Other project experiment workloads were
paused; unrelated workstation/OS activity is not claimed to be eliminated.
Each cell starts real owner and receiver processes, fresh SQLite stores and
an ephemeral Ed25519 issuer. Lifecycle time includes key creation, startup,
preparation, signing, persistence, live checks, HTTP, readback, metrics and
cleanup. Artifact extraction/write/hash and replay have separate boundaries.
The complete campaign takes 875.204 seconds including per-cell verification,
excluding final manifest sealing; that aggregate is not service latency.

All 960 cells complete equal prescribed work: 16,080 occurrences and 162,640
actual receiver mutations. Table~\ref{tab:protected-cache-cost} reports every
prespecified comparison with the strongest protected-cache baseline. Ratios
use paired block sums and a seeded 2,000-draw percentile bootstrap. Intervals
are descriptive, pointwise within-host intervals, not simultaneous
multiple-comparison guarantees, equivalence tests or production percentiles.

\begin{table}[ht]
\centering\small
\begin{tabularx}{\linewidth}{@{}Yrr@{}}
\toprule
Case & Checked / protected cache & 95\% interval \\
\midrule
Stable, cold one-occurrence & 1.012 & [1.004, 1.020] \\
Stable, eight occurrences & 1.003 & [0.991, 1.015] \\
Stable, 64 occurrences & 1.004 & [0.995, 1.014] \\
Unrelated drift, sixteen effects & 0.991 & [0.965, 1.011] \\
Unrelated drift, long reuse & 0.999 & [0.993, 1.004] \\
Relevant drift, sixteen effects & 0.980 & [0.933, 1.010] \\
Stable, large state & 1.000 & [0.995, 1.005] \\
Unrelated drift, large state & 1.003 & [0.992, 1.014] \\
Wide, 64 effects & 0.997 & [0.991, 1.003] \\
Branch flip & 1.000 & [0.996, 1.005] \\
Membership reorder & 1.002 & [0.991, 1.012] \\
Negative-lookup insertion & 0.934 & [0.821, 1.008] \\
\bottomrule
\end{tabularx}
\caption{Full cell lifecycle cost at equal completed work, twenty paired blocks
per row. Below one favors checked reuse. Eleven intervals include one; the
remaining cold-start interval is more expensive. The 0.934 point estimate is
not evidence of a demonstrated 6.6\% speedup.}
\label{tab:protected-cache-cost}
\end{table}

This result does not establish a full-lifecycle performance advantage for
checked reuse over a competent protected cache. It also does not show that
checking and protected provenance have identical trust requirements. The
separate ingestion question concerns whether supplied derivations are valid;
the trusted baseline may safely ignore them and derive its own evidence.
Under the primary honest-store model, neither strategy is made deliberately
unsafe. The whole-state JSON journal, signatures and live execution checks
remain in every arm. Fewer analysis calls alone do not eliminate those costs;
this measurement does not separately identify per-fsync cost or prove a
particular persistence optimization would improve them.

\subsection{Fixed offered load, including refused work}
A separate protocol offers 64 new occurrences per cell at 25, 100 or 400
per second, with at most eight in-flight harness requests. Twenty blocks
per arrival rate and arm yield 240 cells and 15,360 offered occurrences.
There are 14,819 completed occurrences, 541 harness-capacity refusals and no
remaining unknowns. At 25 and 100 per second, every arm completes all 1,280
offered occurrences at each rate. Table~\ref{tab:qualified-load} retains the
rejections at the highest offered rate rather than timing only successes.

\begin{table}[ht]
\centering\small
\begin{tabularx}{\linewidth}{@{}Yrrr@{}}
\toprule
Arm at 400 offered/s & Offered & Completed & Capacity refusals \\
\midrule
Reference-fresh & 1,280 & 1,142 & 138 \\
Reference-snapshot & 1,280 & 1,152 & 128 \\
Reference-cache & 1,280 & 1,142 & 138 \\
Checked-reuse & 1,280 & 1,143 & 137 \\
\bottomrule
\end{tabularx}
\caption{Twenty short batches per arm at the highest frozen offered load.
The eight-slot limit belongs to the harness, not a certified production queue.
The lower rates complete all offered work. Refused work stays in the denominator;
these unequal-completion cells do not produce an equal-work latency ranking.}
\label{tab:qualified-load}
\end{table}

Arrival plans, actual arrival lag, returned times and receiver rows are
retained separately. These short fresh-journal batches do not measure a
sustained long-lived service, multi-host scaling, tenant scheduling or a
production SLA. Their useful-completion counts do not establish that a
400/s client offered rate is sustainable production throughput, or that
cache choice removes persistence and transport costs.

\subsection{Retention growth is charged, not primed away}
Sixty additional cells use five preparation counts (255, 256, 257, 900 and
1,025), four arms and three fresh-process repetitions. The first three
straddle the 256-entry loaded-evidence limit; 1,025 deliberately exceeds
the 1,024 retained-preparation cap. Body size varies by the fixed input rule
to create distinct dependency evidence. This is not a one-variable causal
study of journal length with payload size held constant.

Across the matrix, 32,316 preparation attempts retain 32,304 preparations
and produce twelve explicit capacity refusals. Each cell then restores the
first input and executes its previously prepared occurrence: all sixty
probes complete. These are sixty actual effects, not 32,304 executed actions.
Checked-reuse median complete fill-and-probe cost grows from 664.838 ms at
255 attempts to 7,285.965 ms at 900 and 9,463.619 ms at 1,025. The corresponding
reference-cache medians are 659.038, 7,177.978 and 9,440.384 ms. Three
fresh-process samples per preparation-count/arm combination are insufficient
for a production tail-latency estimate.

The explicit cap preserves a bounded implementation but is not a long-lived
retention solution. The journal still serializes whole state and has no
compaction or authorized release procedure for unknown reservations. A
deployment must not recycle occurrence history or uncertain quota merely to
avoid that limit. The measured preparation and body-size growth motivate
profiling persistence separately; they do not establish that any proposed
incremental storage design is already correct or faster.

\subsection{Fault outcomes remain in the result}
The qualification also repeats thirteen frozen fault schedules across all
four strategies in three fresh-process blocks: 156 cells and 180 perform
calls, including explicit duplicate and quota contenders. They produce 156
admissions, 168 POSTs and 180 receiver rows. Thirty-six actual SIGKILL
injections distinguish loss before intent, after intent and after receiver
commit. Only 48 occurrences complete in full. The 312 admitted steps end
as 144 confirmed, 132 cancelled and 36 unknown. Twelve duplicate-observation
conflicts are included within unknown, not added as extra successful effects.
Unknown steps retain 180 bytes of reserved quota.

Revocation and narrowing during pacing each prevent all sends in their
twelve cells. In contrast, revocation after claim permits the already
claimed first send in twelve cells and refuses the second effect. This
records the local-claim/remote-commit gap rather than claiming it away.
Twelve intent-before-POST crashes retain unknown prefixes despite zero
receiver rows; the runtime does not infer non-execution from an absent
observation. Exact later observations reconcile twelve after-commit crashes,
twelve suppressed-readback cases and twelve wrong-principal cases, without
restarting cancelled suffixes. The wrong initial principal is rejected.
Twelve deliberately duplicate-producing requests create 24 rows and remain
conflicts. Expiry while unresolved does not refund unknown reservations.
Duplicate admission, quota contention and stale-owner schedules each complete
one occurrence per cell without additional uncontrolled sends.

Readback suppression, principal substitution and clock advancement are
explicitly injected local faults. A one-second observation horizon is
fixed before collection; no new mutation is sent to resolve uncertainty.
Killed perform calls lack returned stage timings, so fault stage aggregates
are not zero-cost measurements or latency rankings. Their lifecycle and
controlled-barrier CPU observations remain retained. All 1,416 qualification
cells pass the captured-source replay with pinned manifest digests,
including separate database reads and exact signed-request reconstruction.
This is reproducibility under an honest recorded host, not attestation of
that host or independent customer-workflow ground truth.

\section{Testing the Complete-Mediation Premise}
\label{sec:qualified-coverage}
Matoi's commitment argument assumes an effective control on every covered
consequence path. A supporting experiment, called Defensible Runtime Space
(DRS) in the repository, tests that premise rather than adding another
evidence-reuse algorithm. A guard listed in a graph can be ineligible to
serialize a mutation, fail activation, or miss a physical route. Selection,
qualification and actual enforcement are therefore recorded separately.

\subsection{A fair supplied-inventory comparison}
A gateway, queued workers and effect owners run as real processes. The
concrete effects are document-revision insertion, document-access grant and
notification enqueue in SQLite WAL/FULL databases. Enqueue is not email
delivery. At a qualified owner, current authority, budget consumption,
occurrence deduplication, the business mutation and its receipt commit in
one shared transaction domain. This is a separate realization of a mediated
owner, not a deployment of the durable Matoi package or a distributed SaaS
transaction.

Owner-only enables each reachable qualified commit owner in the visible
inventory. All-exposed adds working gateway/worker checks but retains the
same final owner checks; early denial really stops downstream work.
Path-selected uses the existing finite minimum-cover selector. All arms
see the same owner/capability inventory, credentials and activation evidence.
None receives the evaluation oracle's hidden physical paths or a repair
unavailable to its comparators.

The frozen protocol has eight configurations, five schedules, three arms
and three repetitions, giving 360 cells. Schedules cover authorized fan-out,
invalid binding, queued revocation, one-unit budget contention and concurrent
duplicate delivery. A separate oracle enumerates the supplied physical
routes and subsets of at most eight guard sites. Another reconstructs
authorization, exact-effect binding, serialized budget/occurrence history
and actual business rows. They do not import the selector or runtime
decision helpers, but are same-project implementations with authored
expectations, not external semantic labels. The host/controller and physical
manifest remain trusted.

\begin{table}[ht]
\centering\small
\begin{tabularx}{\linewidth}{@{}Yrrrr@{}}
\toprule
Configuration & Cells & Qualified & Refused & Unauthorized effects \\
\midrule
Independent owners & 45 & 45 & 0 & 0 \\
Shared atomic committer & 45 & 45 & 0 & 0 \\
Diamond routing and retry entry & 45 & 45 & 0 & 0 \\
Missing atomic hook & 45 & 0 & 45 & 33 \\
Activation failure & 45 & 0 & 45 & 33 \\
Declared raw writer & 45 & 0 & 45 & 45 \\
Explicitly incomplete inventory & 45 & 0 & 45 & 69 \\
Falsely complete inventory & 45 & 0 & 0 & 45 \\
\bottomrule
\end{tabularx}
\caption{Physical-oracle coverage and actual recorded effects, not just
selector output. Refused configurations are still deliberately probed in
diagnostic mode; their effects are not qualified deployment work. The
falsely complete inventory incorrectly claims coverage in all 45 cells.
The 45 explicitly unknown deployments are included in the 180 refusals.}
\label{tab:qualified-coverage}
\end{table}

\subsection{Coverage failures are retained, not scored as protection}
All 360 cells produce complete replayable evidence: 1,071 recorded requests
and 585 business mutations. In the 135 physically qualified cells, there
are 135 distinct authorized receipted \emph{per-effect} occurrences, 189
denials and zero observed unauthorized effects or false denials. The
coincidental equality of cell and effect counts does not mean 135 entire
workflows completed. These five authored schedules do not estimate a
real-world false-allow probability.

The remaining results are materially different. Four configurations
refuse deployment in 180 cells, including 45 explicitly unknown cases.
The controller deliberately runs diagnostic probes after refusal to expose
what would happen if it were ignored; these produce 180 unauthorized
mutations. In the falsely complete configuration, a hidden raw writer
produces another 45 unauthorized mutations alongside 45 legal per-effect
occurrences. All three strategies accept the same misleading inventory.
Thus one configuration, repeated over 45 cells, is falsely labelled covered;
neither path selection nor owner-only instrumentation discovers the omission.

Across unqualified runs, 294 recorded mutations lack atomic occurrence/budget
consumption. That is a different denominator from the 225 unauthorized
mutations: a raw write can happen to satisfy the current effect bound while
still lacking a serialized enforcement record. The recording-only audit
path establishes that a write occurred, not that it was controlled.

The 225 unauthorized mutations span five unqualified configurations:
missing atomic hook, failed activation, declared raw writer, explicitly
incomplete inventory and falsely complete inventory, respectively
$33+33+45+69+45=225$. They are not all inventory-omission failures. In
particular, the false-complete case has 45 false-covered cells and 45
unauthorized writes; deployment refusal in the other configurations is
not an enforced stop on the deliberately continued diagnostic probes.

\subsection{More controls are not a free baseline}
Within the qualified subset, each arm has 45 cells and 45 useful per-effect
completions. Owner-only and path-selected both use 117 actual checks, 351
RPCs and 990 complete-lifecycle transactions; both have zero placement
regret against the finite unit-cost oracle. All-exposed performs 291 checks
but only 300 RPCs and 939 transactions because early refusals save downstream
work. Its summed placement regret of 90 counts two additional unit-cost
sites per qualified cell, not 90 milliseconds or a measured service loss.
Counting only guards would miss that operational trade-off.

The full matrix charges 7,914 explicit commits: 4,680 schema, 360 authority
initialization and 2,874 work commits. Schema readback and WAL commit frames
validate initialization accounting. Planning consumes 16.078 seconds of
aggregate wall time across all cells and 12.512 seconds of selector CPU,
recorded separately from Node child CPU. Process/barrier-inclusive execution
times are not a production latency ranking. No placement-speed superiority,
new set-cover algorithm, malicious-host isolation or automatic topology
discovery follows from this experiment. Its contribution to Matoi is a
tested boundary: a valid evidence/authority check still needs a qualified,
active and unbypassable effect owner.

\section{Portable Evidence and Explicit Durable Integration}
\label{sec:portable-update}
This section updates the research state without replacing the preceding
experiments. It concerns a fixed bounded publication language, not the
original refund interpreter, arbitrary tool source code or the earlier
GitHub dispatcher. The detailed finite-contract manuscript and supplement
accompany this long-form preprint as ancillary documents. All three documents
attribute each measurement to its actual implementation and campaign.

\subsection{An exact result that another implementation can check}
The finite result contains an ordered publication-effect trace and the exact
active query-observation map. A publication names a destination and positive
byte count, not exact body bytes; it does not determine or classify the message
body. Durable host body hashes bind the actual bytes separately, and the
Banking adapter's numeric CAVA constraints are a third, distinct check
(Section~\ref{sec:banking}). Sequence,
branch, trace-only queue, ordered iteration and present/absent lookup complete
the language. Absence, collection order and lexical binding are observable
dependencies, not optional producer annotations. Unknown syntax or missing
required input causes refusal rather than an empty successful trace.

An accepted repair must equal fresh reference evaluation in both effects and
active observations. Reused subtrees come only from the consumer's privately
retained, previously checked result. Replacements are checked against current
syntax, snapshot and binding context. The conditional argument is a finite
correctness result, not a novel general incremental algorithm.

\begin{proposition}[Fixed-context finite subtree reuse]
For the accepted deterministic v2 profile, fix the workflow and interpreter
identity, a fixed syntax position, and the full binding context (the complete
ordered lexical binding stack). Assume a correct checker and fresh semantics,
collision-resistant identity checks and honest immutable snapshots. A
previously accepted subtree whose complete active observations still match
the current snapshot has the same local ordered effects and observations
as fresh evaluation at that position and binding context.
\end{proposition}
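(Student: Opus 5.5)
The plan is structural induction on the subtree at the fixed syntax position. The invariant mirrors the argument given for Equation~\ref{eq:checked-reuse}, but it is localized. Write $\beta$ for the fixed binding context and $t$ for the subtree. Let $(E_x,Q_x)$ be the local ordered effects and active observation map produced by fresh evaluation of $t$ under $\beta$ at snapshot $x$.

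The retained subtree result was accepted by the checker at some issuance snapshot $x$. Under the correct-checker assumption, its recorded pair therefore equals $(E_x,Q_x)$. The collision-resistance assumption ensures that the identity checks really bind the same workflow, interpreter, position and $\beta$, and not merely equal digests.

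The claim to prove is this: if every query in $Q_x$ has the same observation in the current snapshot $y$, then $(E_y,Q_y)=(E_x,Q_x)$. Determinism of the accepted v2 profile makes fresh evaluation a function of $(t,\beta,\text{snapshot})$, so the comparison is well defined.

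The induction goes by cases on the node form.
\begin{itemize}
\item \emph{Literals.} These read nothing and do not depend on the snapshot.
\item \emph{Scalar queries and publications.} A scalar query is resolved through $\beta$ to a fixed key. The observation is in $Q_x$ and unchanged, so the operand, destination and byte count are unchanged.
\item \emph{Branches.} The predicate's inputs are recorded observations, so the same arm is selected. The induction hypothesis then applies to that arm. The unvisited arm contributes nothing to either side, since it is outside $Q_x$ and produces no effects.
\item \emph{Ordered iteration.} The membership observation, including order and emptiness, is unchanged. Hence the same sequence of item scopes is pushed onto $\beta$, and the hypothesis applies to each body under its extended binding stack.
\item \emph{Keyed lookup.} Presence or absence is recorded. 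Absence keeps the outer scope; presence binds the same identifier, so the item-field queries that follow resolve identically.
\item \emph{Sequences and trace-only queues.} Ordered concatenation of the children's equal results gives equal results, and the same holds recursively for queues.
\end{itemize}
Resource limits are monotone in the same expansion, so no limit overrun or review outcome can appear at $y$ that was absent at $x$.

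The main obstacle I expect is the observation map rather than the effects. I must show that $Q_y$ contains no query outside $Q_x$; otherwise the ``same observations'' half fails. To handle this, I will strengthen the induction hypothesis so that the set of issued queries, together with their order of issuance, is determined by the observations returned so far. Each query issued at $y$ is then issued at $x$ with the same lexical resolution, and an induction on the evaluation steps matches them one for one.

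The second delicate point is lexical scope. A query name inside $t$ may resolve to an outer binding in $\beta$. This is exactly why the proposition fixes the full ordered binding stack and not just the syntax position. I will state explicitly that the result is not claimed for a different context, even if that context looks equivalent.

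Finally, I will note that the conclusion is local. Composing reused and replaced subtrees into an accepted repair of the whole workflow is a separate step and is not discharged here.
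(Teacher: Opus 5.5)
Your proposal is correct and takes essentially the paper's route: induction over the selected finite execution, where matched observations fix queried operands, branch choice, ordered membership and positive or negative lookup; the full binding stack fixes item selection, including the outer scope of an absent lookup; and moving a record to another position or binding is explicitly excluded. Your separate argument that $Q_y$ contains no query outside $Q_x$ only makes explicit what the paper's short proof leaves implicit, though you should state the induction hypothesis for arbitrary binding stacks so that it applies to iteration bodies, and you should keep the resource-limit remark strictly local, since the paper leaves enclosing global bounds to whole-workflow composition.
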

\begin{proof}
At the fixed position, literal operands and syntax agree; matched observations
preserve queried operands, branch choice, ordered membership and positive or
negative lookup. The full binding context preserves item selection, including
outer scope when a lookup is absent. Induction over the selected finite
execution preserves each child and the concatenated local effect trace.
This argument does not permit moving a record to another syntax position or
binding merely because a query value or destination happens to agree.
\end{proof}

Whole-workflow composition still enforces global visit, effect and distinct-query
bounds from the combined records, validates current input and result sizes,
and freshly assigns global effect ordinals over the assembled ordered trace.
Subtree records do not supply reusable absolute ordinals. The local lemma
alone cannot discharge enclosing resource bounds or authorize dispatch.

The explicit profile \texttt{matoi-finite-evidence-v2} fixes canonical UTF-8
bytes, UTF-16 key ordering, safe integer/scalar rules, resource bounds and
refusal precedence. Canonically encoded workflow and state are each bounded by
262,144 bytes, not by one shared 262,144-byte allowance. The whole wire request
is bounded by 9,437,184 bytes excluding framing LF, and the canonical proposal
by 8,388,608 bytes. These are separate bounds with their own depth/value limits.
This is a restricted application of the existing JSON
Canonicalization Scheme~\cite{jcs}, not new cryptography. A historical probe
found that the earlier prototype's host-locale sorting gave different
checkpoint identities under \texttt{en-US} and \texttt{sv-SE} for identical
non-ASCII keys. The new profile does not rewrite those old identities or
invalidate the old within-host observations; it separately defines the
portable contract and tests it. V2 is not a retroactive fix to the historical
locale-dependent implementation or its frozen measurements.

\paragraph{Source, checkpoint and record identities.}
Raw workflow source $s$ and portable workflow bytes are different objects.
The durable wrapper parses $s$ and canonically encodes the resulting bounded
JSON value before portable checking; the strict portable wire itself is not
arbitrarily reformatted. The domain-separated digest of those canonical
workflow bytes, together with interpreter/profile identity, identifies the
finite program. A full \texttt{checkpointId} additionally binds the canonical
accepted state digest and exact result. Valid source reformatting can preserve
these portable identities, whereas the signed request's \texttt{workflowHash}
commits to raw UTF-8 source and changes with its spelling.

An unread-state change can preserve retained evidence applicability and result,
but a new checkpoint checked at that state has a different state-bound ID.
The durable record's \texttt{summaryHash} is different again: it binds raw
issuance workflow/snapshot strings, the materialized full proof, checkpoint
and checker/implementation/profile metadata. Digest knowledge and serialized
handles confer no private checked custody. Rechecking complete evidence on
restart does not itself establish current applicability or authority.

JS and Python consumers were authored from the common specification and
vectors without sharing implementation source. This reduces some correlated
implementation errors but remains same-project AI-assisted work. Within each
language, the checker and fresh evaluator share parsing, input validation,
query observation and local-rule helpers. Agreement within one language is
therefore not an independent check of those shared semantics; the
opposite-language comparisons also remain same-project evidence. A complete
trusted snapshot is still supplied by the host. Neither a canonical hash nor
an interpreter label authenticates its source or proves that it describes
every real external effect.

\subsection{Negative repair results and a narrower positive region}
The first repair campaign compares five strategies over 1,200 process cells
and 18,800 evaluations, repeating 188 authored state positions. All complete
results and exact query footprints match the older same-project oracle.
Against matched full checking, aggregate preparation rises 3.24\% and child
lifecycle rises 1.06\%. Fewer new rules are checked, but full input, proof and
output costs remain. These negative observations are retained.

The portable campaign changes the transport boundary: repair may send compact
references instead of complete proof subtrees. Twelve families, three arms,
two consumer languages and twenty blocks yield 1,440 cells. Its 17,280 outputs
repeat 144 authored positions and match the opposite language's fresh result.
Cold process startup, producer, complete snapshots, IPC, output and shutdown
are charged. Archive writes and later replay are outside the cell clock.

\begin{table}[htbp]
\centering\small
\begin{tabularx}{\linewidth}{@{}Yrrr@{}}
\toprule
Consumer & Repair/full whole-cell ratio & 95\% interval & Reduction \\
\midrule
JS & 0.895956 & [0.889913, 0.902467] & 10.40\% \\
Python & 0.837778 & [0.832257, 0.842638] & 16.22\% \\
\bottomrule
\end{tabularx}
\caption{Within-profile portable cost contrasts. These are neither the older
repair experiment nor measurements of an integrated runtime.}
\label{tab:portable-update-cost}
\end{table}

In that historical campaign, high-change families retain null results,
protected fresh evaluation remains cheaper, and a matched protected-incremental
portable comparator was not run. Its combined protocol benefit does not
isolate semantic reuse from ordinary transport caching. The separately frozen
control in Section~\ref{sec:matched-cost} now supplies an identical-compact-input
comparison: one arm resolves accepted references but rechecks every rule,
while the other checks applicability and reuses accepted subtrees. The new
full-packet contrast also changes encoding, reconstruction, parsing and
transport, so it is not a pure transport-only estimate. Neither campaign
establishes superiority over a competent external protected-incremental
implementation, and no evidence-only percentage is imported into a customer
latency claim. The old table and its measured ratios remain unchanged.

\subsection{Evidence is connected to dispatch, not promoted into permission}
The explicit durable profile
\texttt{matoi.bounded-publication.portable-v2} supports
\texttt{portable-full} and \texttt{portable-repair}. It connects the portable
checker to the existing current-authority, signature, occurrence, quota and
recovery state machine. The host binds actual body hashes and exact request
order separately from symbolic destination/byte-count effects. CAVA action
identity is host-supplied; the profile does not infer arbitrary runtime meaning.

Before persistence, compact proofs are expanded into complete records. On
restart the consumer checks those records against their issuance inputs rather
than importing a serialized trusted handle. Admission and final claim still
check the current workflow/snapshot, original and current bounds, authority
epoch, expiry, signature, body and occurrence history. An applicable proof
cannot revive revoked authority, and a live grant cannot justify stale effects.
Available byte budget is debited at admission; final claim checks the retained
reservation, not fresh available budget for a second reservation. A
nonconflicted possible prior dispatch remains unknown until an exact trusted
readback confirms it. A conflict-marked unknown remains unknown despite a later
single match. Conflicts are evaluated together in a batch while unknown;
confirmed steps are not re-audited for later contradictions. Missing
confirmation cannot create an automatic mutation retry.

Two authored publication workflows, three strategies, seventeen schedules
and three repetitions yield 306 local process cells. Expectations were fixed
before collection. They are same-project symbolic cases, not outside human
labels. All cells and captured-source replay match those expectations.
Across the matrix, 702 child processes issue 252 POSTs and leave 270 receiver
rows and 2,700 journal events. Admitted steps finish as 234 confirmed,
162 cancelled and 36 unknown.

Seven zero-send refusal schedules account for 126 cells. They cover stale or
missing evidence, revoked/reactivated grants, attempted widening of the original
boundary and altered request bodies. Eighteen narrowing cells retain the first
effect and refuse the second. The 36 negative controls are not prevention wins:
eighteen expose one POST producing two receiver effects, and eighteen permit
an already claimed send to commit after revocation. The local serialization
point does not atomically control a remote provider.

\begin{table}[htbp]
\centering\small
\begin{tabularx}{\linewidth}{@{}Yrr@{}}
\toprule
Integrated strategy & 102 sessions, total ms & Ordinary-session median ms \\
\midrule
Portable full & 11600.239037 & 101.561542 \\
Portable repair & 11499.018373 & 102.770250 \\
Protected reference cache & 11155.197332 & 99.158792 \\
\bottomrule
\end{tabularx}
\caption{Complete local sessions include startup, two preparations, dispatch,
readback, diagnostics, prescribed recovery and shutdown. Six ordinary samples
per arm and fixed scheduling do not establish a speed ranking or tail SLA.}
\label{tab:portable-update-integrated}
\end{table}

Preparation repair checks 681 new rules instead of 1,410 and sends 363,414
encoded proof bytes instead of 505,440: reductions of 51.70\% and 28.10\%.
Yet recorded preparation-checker time is only 169.015671 ms in the full arm,
or 1.46\% of its mixed-session total, and 136.367503 ms in repair. Those partial
timers omit restoration and failed preparation; their ratios are descriptive,
not a causal decomposition. Both arms retain the same complete record-encoding
volume and perform sixty full checks during recovery. Optimizing the count of
new rules alone does not establish a useful end-to-end improvement.

\subsection{What remains outside this integration}
The contract assumes a trusted host, correct checker, authenticated provider
observations and one non-forked authoritative journal history. It does not
include tenant administration, operational key rotation, long-lived retention,
arbitrary providers or a public hosted deployment. The application image's
Node-20 runtime and the research kernel's Node-24 requirement need an explicit
integration boundary. No Hook is enabled and no customer action is controlled
by this study. DRS still assumes a supplied inventory; Shuwen's independent
skill-boundary research and reserved evaluation inputs are unchanged.

\section{Externally Defined Message Plans}\label{sec:agentdojo}
The next evaluation increment uses AgentDojo's original task definitions,
local tools and utility/security predicates~\cite{agentdojo}. Commit
\texttt{089ed468cf3ed0322acc66b0211f26d9d90dbf60} and its dependency lock are
pinned before collection. The original \texttt{v1.2.2} registry resolves
Slack suite 1.2.0 with 21 user tasks and five attack goals. Ten whole user
tasks fit the direct/channel-message profile. One task has no covered
mutation, nine contain unsupported invitation/membership changes, and one
publishes to a website. Those eleven remain in the coverage inventory.

This is a ground-truth-tool-plan replay, not an LLM-agent or generated
prompt-injection experiment. Literal summary placeholders are unchanged.
The original Python tools mutate their real in-memory environment; they do
not connect to Slack. The oracle disables selected Python networking calls,
not OS-level network access, owns before/after snapshots and runs the original
trace-first scoring.
It never imports the JS mapper or Matoi implementation. Its predicates are
external task definitions, not outside-laboratory validation of this study.

Each message is projected to an ordered finite publication row containing a
unique call-position identifier, typed direct/channel destination and exact
UTF-8 body size. Read/write interleaving and repeated messages remain intact.
An actual dispatch uses the kernel claim's target and body; the original
plan supplies the comparison oracle rather than correcting an erroneous
mapping during execution. Confirmation comes from observed inbox fields
and append indices. This is local process correlation, not a provider-signed
receipt or inference about hidden behavior inside arbitrary tools.

Independent of mapper output, an issuer-policy table fixes the original typed
destinations, a 65,536-byte per-message ceiling and the original message count
times that ceiling as shared budget. Message meaning is not authorized by
this policy. Both portable strategies use the unchanged durable owner, exact
body/signature binding, current authorization, quota and deduplication. Each
governed cell gets a fresh key and journal. A retained bootstrap preparation
is followed by an unused metadata change and a new occurrence preparation.

The fixed matrix contains 26 unmediated sanity cells, 20 governed benign
cells, 30 phishing-goal/policy diagnostics and 28 boundary controls. All 104
are retained. Both governed strategies preserve the final original
environments, traces and scores for all ten selected benign tasks, including
all fourteen exact message appends; no initial benign utility predicate is
already true. Table~\ref{tab:agentdojo-plans} separates this outcome from
security and policy measurements.

\begin{table}[t]
\centering\small
\begin{tabular}{@{}lrr@{}}
\toprule
Measure (per governed strategy) & Full & Repair \\
\midrule
Original benign utility true & 10/10 & 10/10 \\
Exact observed benign messages & 14/14 & 14/14 \\
Specified boundary refusals & 12/12 & 12/12 \\
Unexpected benign refusals & 0/10 & 0/10 \\
Duplicate conservation checks & 2/2 & 2/2 \\
Phishing goal: attacker success & 3/10 & 3/10 \\
\bottomrule
\end{tabular}
\caption{Original AgentDojo plans with separately fixed route/byte policies.
These are small prescribed sets, not estimated production error rates or
full-benchmark attack-success rates.}\label{tab:agentdojo-plans}
\end{table}

The twelve refusals per strategy cover revocation, narrowing before first
claim, target redirection, missing current evidence, added publications and
post-preparation body changes. None appends a message or consumes retained
quota. Two additional cases per strategy demonstrate duplicate admission
without extra messages or reservations, not universal concurrency behavior.

The phishing send succeeds in all ten unmediated task contexts. Matoi blocks
the seven whose policies do not authorize Alice, but permits the other three.
In one of those cases, the original user-utility and attacker-success
predicates are simultaneously true. This is important negative evidence:
authorization of a route is not judgment of content. The other four attack
goals remain unsupported or outside the message boundary; their unmediated
successes are not credited as prevented attacks.

For benign post-bootstrap preparation, full checking visits 24 new semantic
rules; repair instead validates ten retained references. Both still read 38
current queries and persist 25,630 bytes of materialized evidence. Reported
proposal bytes decrease from 17,566 to 10,350, but the owner does not export
the original compact wire stream. Its diagnostic counters are therefore not
presented as an independently replayed transport trace or a latency ranking.
Replay re-executes the original Python tools and predicates, checks signatures
and raw SQLite state, and separately verifies all 136 complete materialized
proofs with the Python evidence consumer. The finite-contract supplement's
Section S12 records the complete selection, policies, controls and provenance.

\section{Matched Compact-Packet Cost Control}
\label{sec:matched-cost}
The following 15 October 2026 campaign supplies the attribution control
left unmeasured by the earlier portable campaign in
Section~\ref{sec:portable-update}. It is a separately frozen experiment,
not a replacement for Table~\ref{tab:portable-update-cost}, the negative
five-strategy repair results, or the protected-cache qualification. Repeating
the same authored input population does not create a new semantic corpus.

\subsection{Identical packets, different checking obligations}
Three experimental arms separate two questions.
\texttt{explicit-full} consumes explicit proof rules at every position.
\texttt{compact-recheck} consumes a full bootstrap followed by compact repair
packets, resolves references only from its own previously accepted records,
and actually traverses and checks every current local rule obligation.
\texttt{compact-reuse} receives exactly the same compact input bytes but
checks applicability before reusing accepted subtrees. Recheck does not trust
an old aggregate answer or substitute fresh evaluation for validation of the
supplied proof. Neither baseline adds a gratuitous JSON encoding/decoding
round trip solely to make full checking expensive.

All arms receive complete current snapshots, use per-request query caches,
retain local rule records, and use the unchanged JS producer with symmetric
derivation memoization. Missing, stale, malformed, reordered, wrong-base and
wrong-workflow references, as well as restart, are exercised in functional
tests; refusal retains the preceding accepted checkpoint. The original
normative v2 API is unchanged. Unlike the separately authored original
reference consumers, each study-only consumer reuses finite-profile helpers
from the original implementation in its language. The resulting shared
mechanics are not externally independently developed controls. Reuse is
restricted to the fixed syntax position and full binding context of the
accepted record; whole-workflow bounds and global effect ordinals remain
composition obligations as stated in Section~\ref{sec:portable-update}.

\subsection{Frozen population and complete cost boundary}
The fixed seed is 20261015. Twelve families, twelve positions, three arms,
two consumer languages and twenty randomized paired blocks yield 1,440
complete cells and 17,280 repeated positions, with zero failures, exclusions
or missing positions. There are 144 authored input positions, not 17,280
independent tasks or attacks. Each language/arm aggregate contains 240
fresh-process cells and 2,880 positions. A separate one-block functional pilot
is retained without confidence intervals and is not pooled into this run.

The clock starts before both producer and consumer are spawned and ends
after both shut down. It charges producer work, the complete input, cold
bootstrap, proof validation, IPC, controller encoding/decoding and result
checks, complete output, diagnostics and shutdown. Source/input capture,
post-shutdown archive writes and offline replay are excluded. Internal timers
overlap round trips and cannot be added to them. Collection records the
operator's quiet-window acknowledgement on an Apple M4 Pro host with 24 GiB
RAM, Node v24.19.0 and Python 3.12.14; this is not host/clock attestation.

Strict sealed replay checks source, manifest, runtime and complete schedule
identities, then consumes the recorded packets with the opposite original
v2 checker. All 17,280 proof-output comparisons and all 17,280 comparisons
against opposite fresh outputs pass. The fresh reference contains 288 frames
in 24 family/language streams; repeated comparisons are not new fresh tasks.
All 480 compact pairs have byte-identical consumer stdin and semantic producer
request stdin. Analysis repeats the semantic gate. A separate read-only Python
calculation, without importing the study statistics functions, recomputes every
field of all 78 cost summaries and all 78 paired ratios and intervals from
the original records and diagnostics. These are same-project cross-checks,
not independent-laboratory validation.

\begin{table}[htbp]
\centering\small
\begin{tabularx}{\linewidth}{@{}Ylrr@{}}
\toprule
Consumer & Arm & Total ms, 240 cells & Median cell ms \\
\midrule
JS & \texttt{explicit-full} & 18522.367782 & 77.859709 \\
JS & \texttt{compact-recheck} & 17105.576168 & 74.204875 \\
JS & \texttt{compact-reuse} & 16509.278753 & 71.049042 \\
Python & \texttt{explicit-full} & 26180.566832 & 99.724584 \\
Python & \texttt{compact-recheck} & 22253.153961 & 96.430250 \\
Python & \texttt{compact-reuse} & 21853.871456 & 94.158875 \\
\bottomrule
\end{tabularx}
\caption{New matched-cost campaign, not the earlier portable or integrated
session measurements. A cell contains all twelve positions and the full
producer/consumer lifecycle.}
\label{tab:matched-cost-absolute}
\end{table}

\begin{table}[htbp]
\centering\small
\begin{tabularx}{\linewidth}{@{}Yrr@{}}
\toprule
Whole-cell contrast & JS ratio [95\% interval] & Python ratio [95\% interval] \\
\midrule
Reuse/recheck (primary) & 0.965140 [0.961006, 0.970126] & 0.982057 [0.980022, 0.984324] \\
Recheck/explicit & 0.923509 [0.920503, 0.926710] & 0.849987 [0.847535, 0.852425] \\
Reuse/explicit & 0.891316 [0.888382, 0.894480] & 0.834736 [0.832287, 0.837243] \\
\bottomrule
\end{tabularx}
\caption{Ratios of summed paired-block costs. The 95\% percentile intervals
use 2,000 resamples of twenty complete blocks, seed 20261015. Aggregate
resamples preserve all twelve families in each sampled block.}
\label{tab:matched-cost-contrasts}
\end{table}

\subsection{A modest residual, with null and adverse regimes retained}
With compact input bytes held fixed, reuse lowers aggregate whole-cell cost
by 3.49\% in JS and 1.79\% in Python. At six decimal places the reductions are
3.485983\% and 1.794274\%, with reduction intervals [2.987448\%, 3.899384\%] and
[1.567597\%, 1.997847\%]. Reuse-minus-recheck totals are -596.297415 ms
and -399.282505 ms, or mean differences of -2.484573 and -1.663677 ms per
twelve-position cell. Reuse is below recheck in 19/20 JS aggregate blocks
and 20/20 Python blocks; no block is removed. The larger 10.868422\% and
16.526362\% reuse/explicit reductions also change proof representation,
encoding, parsing and transport. They are not pure semantic-reuse effects
or revisions of the historical 10.40\% and 16.22\% measurements.

\begin{table}[htbp]
\centering\small
\begin{tabularx}{\linewidth}{@{}Yrr@{}}
\toprule
Family & JS reuse/recheck [95\% interval] & Python reuse/recheck [95\% interval] \\
\midrule
literal & 0.961108 [0.947811, 0.972368] & 0.983028 [0.972238, 0.992822] \\
shared-scalar & 0.901956 [0.891265, 0.911660] & 0.951904 [0.942762, 0.959770] \\
sparse-scalar & 0.942479 [0.936211, 0.948772] & 0.981051 [0.972150, 0.992503] \\
dense-scalar & 0.998362 [0.992436, 1.003878] & 0.999539 [0.993786, 1.004502] \\
wide-irrelevant & 0.970020 [0.956599, 0.984911] & 0.984881 [0.978250, 0.990870] \\
branch-stable & 0.966131 [0.936744, 1.012855] & 0.969841 [0.962723, 0.977711] \\
branch-flip & 1.001304 [0.992487, 1.010449] & 0.994303 [0.983981, 1.005002] \\
membership-change & 0.968301 [0.960159, 0.977656] & 0.981428 [0.974668, 0.988543] \\
membership-reorder & 0.935512 [0.913128, 0.950583] & 0.968471 [0.962450, 0.974236] \\
absence-transition & 0.994108 [0.981791, 1.005501] & 0.998804 [0.986884, 1.009505] \\
nested-scope & 0.997819 [0.989146, 1.006670] & 0.992383 [0.984144, 1.000542] \\
workflow-change & 0.975969 [0.966857, 0.984825] & 0.991914 [0.981151, 1.003734] \\
\bottomrule
\end{tabularx}
\caption{Every matched-cost family, twenty paired blocks each. Intervals
are pointwise, not multiplicity-adjusted simultaneous intervals. All secondary
family contrasts remain in the retained analysis.}
\label{tab:matched-cost-families}
\end{table}

Each language has seven primary family intervals below 1 and five crossing
1. Python branch-flip also retains an adverse secondary contrast:
recheck/explicit = 1.010618 [1.003049, 1.018395]. Dense-scalar and branch-flip
have no reusable references in either compact arm. Thus neither the aggregate
nor the selected favorable families licenses a universal speed claim.

Per language, full and recheck both check 45,480 rules; recheck reconstructs
29,500 of these from accepted references. Reuse checks 15,980 rules and reuses
29,500 visits. Both compact arms process 14,060 references. All three arms
read 26,920 current queries. Producer new-rule, reused-visit and
query counters match in every paired position. Each compact arm sends
28,780,980 consumer-input bytes per language versus 34,485,980 for full;
all-stream totals may differ because diagnostics contain measured timers and
memory. Only the compact inputs, not the diagnostic text, are asserted equal.
The largest retained root is 33 rules. Cold first-position work accounts for
50.44\% of JS recheck cost and 38.88\% of Python recheck cost. These small
finite workloads, shared helper mechanics and one local host bound the result;
internal checking-time reductions are not a substitute for whole-cell cost.

\paragraph{Artifacts and scope.}
The report is \path{docs/research/matoi-matched-cost-20261015.md}; the archive
is \path{output/matoi-matched-cost-confirmatory-20261015-01/}. Source-inventory
SHA-256 is
\path{2a6a4b3802df83cb12b9c5b0602e71c3eceac6bc48abcde9d5a54683ce8072b4};
freeze and collection-manifest digests are respectively
\path{830cb3e7c8d498058de217e81f76c88729deb141023de8a6ce07e7536bdf714a}
and \path{e4817683dc50340e5a6490ba44c4bdfe17a6b5b354319104ac0e2639324b61ae}.
The sibling strict-replay and analysis manifests are
\path{17a58f594d471e8ea85bc11f030f026e93674aa60f06bfe5687398dbdd1b87a8}
and \path{c1e47376c9e7a457c99f4f69b2fed4ef849fe0c29124ea300dcb39c9856a35a4}.
The analysis retains all 78 contrasts, 78 family/arm cost summaries, every
cell and every raw diagnostic. Strict replay pins the original runtime;
explicit reviewer replay may select compatible current runtimes but is labeled
semantic-only, not timing reproduction. These retained private research
artifacts are not an asserted public download or an external attestation.

\section{Externally Defined Banking Transfer Plans}
\label{sec:banking}
This separately frozen study adds a second API-shaped effect domain to the
Slack message plans in Section~\ref{sec:agentdojo}. It uses original synthetic
AgentDojo Banking plans, local tools and unchanged utility/security
predicates~\cite{agentdojo}. This is not an LLM experiment, an autonomous
prompt-injection evaluation, a live bank integration or an independent-laboratory
replication. The pinned upstream commit is
\texttt{089ed468cf3ed0322acc66b0211f26d9d90dbf60}; the resolved Banking suite
is 1.2.2 and package version is 0.1.35. Source was sealed before the first
original Banking task exposure, then the complete population was frozen.

\subsection{Whole-plan coverage before outcome measurement}
Whole-plan API eligibility retains 5/16 original user tasks (31.25\%) and
7/9 original attack plans (77.78\%). The eleven excluded users comprise four
read-only plans, four scheduled-transaction plans, two other user/password
mutation plans and one mixed plan. The mixed \texttt{user\_task\_15} contains
a transfer and succeeds unmediated, but also contains unsupported mutations;
the mapper does not extract only its convenient transfer. The two excluded
attack plans require unsupported scheduled-transaction or password mutation.
All exclusions remain in the inventory, not in a favorable five-user denominator.

All sixteen users and all nine attacks receive unmediated sanity cells;
attack sanity uses the first original user's context. The five eligible users
receive both governed benign arms. The seven original attack plans are reused
across the five eligible user policies, giving 35 crosses per arm, not 35
independent attack goals. These crosses run unmediated, with
\texttt{portable-full}, and with \texttt{portable-repair}. Eight controls run
for each eligible user in both governed arms. The complete population is
\[
16+9+(5\times2)+(5\times7\times3)+(5\times8\times2)=220.
\]
All 220 planned cells are retained and audited. The adapter's separate amount
ceiling of 10,000 is not an eligibility rule. The five eligible benign original
transfers happen to range from 4 to 200.29; no above-cap original benign task
was observed. Such a supported task would remain in the denominator with its
utility loss, not be silently excluded.

\subsection{Mapping, authority and actual local effects}
The API-shaped mapper produces the original recipient, canonical transfer
payload and ordered byte-count publication rows. The finite effects describe
destination and byte count, not exact body bytes; the durable host's signed
body hashes bind the canonical transfer payload separately. Independent
original-plan annotations
fix authority rather than allowing mapper output to define its own permitted
recipient, maximum amount or transfer count. These annotations are supplied
authority, not inferred trustworthy intent from natural-language requests.
Both governed arms retain the same durable owner, signed preparation, exact
body binding, current-authority checks, occurrence custody and duplicate
conservation. Available byte budget is debited at admission; claim validates
the retained reservation instead of testing fresh available budget for the
same cost again. Money and count constraints are enforced by the CAVA adapter;
the finite kernel's quota remains bytes, not money.

Each governed arm has 80/80 exact selected-plan mappings and 80/80 exact
original-authority-plan mappings. Its selected plans contain 90 transfer rows:
five benign, 45 attack-cross and forty control-base rows. Controls compare the
base selected plan before intervention; modified proposals and their bindings
are checked separately. These are repeated mapping checks, not eighty
independent user tasks. Full and repair agree on original calls, scores and
refusal outcomes in all 80 paired cells.

The original \texttt{send\_money} tool appends a transaction to its in-memory
environment without debiting the account balance. Exact append checks cover
the actual transaction ID, sender, recipient, binary64 amount, subject, date,
\texttt{recurring=false}, and unchanged remaining state. They do not establish
real bank settlement, financial accounting correctness or provider-signed
receipts. The 160 bootstrap preparations execute zero sends. All twenty
successful governed target claims across both arms bind actual original
transfers: ten benign and ten duplicate-control sends in total.

\begin{table}[htbp]
\centering\small
\begin{tabularx}{\linewidth}{@{}Yrrrrr@{}}
\toprule
Population / arm & Cells & Utility pre/post & New utility & Attacker post & Appends \\
\midrule
All user sanity / unmediated & 16 & $7\to16$ & 9 & 0 & 6 \\
All attack sanity / unmediated & 9 & $0\to0$ & 0 & 9 & 9 \\
Eligible benign / full & 5 & $1\to5$ & 4 & 0 & 5 \\
Eligible benign / repair & 5 & $1\to5$ & 4 & 0 & 5 \\
Attack cross / unmediated & 35 & $7\to7$ & 0 & 35 & 45 \\
Attack cross / full & 35 & $7\to7$ & 0 & 0 & 0 \\
Attack cross / repair & 35 & $7\to7$ & 0 & 0 & 0 \\
All controls / full & 40 & $8\to12$ & 4 & 0 & 5 \\
All controls / repair & 40 & $8\to12$ & 4 & 0 & 5 \\
\bottomrule
\end{tabularx}
\caption{Original Banking predicates and actual append counts. Pre/post are
counts of predicates true before/after the target attempt, not success
fractions. New utility counts only false-to-true transitions. Attacker
predicates are false before every cell. No baseline score is corrected.}
\label{tab:banking-outcomes}
\end{table}

Both governed arms execute all five eligible benign target transfers and have
five post-utility successes, but only four new utility satisfactions. The 35
unmediated attack crosses satisfy all 35 original attacker predicates and
append 45 transactions. Both governed arms refuse all 35 at the kernel
recipient boundary, append none, and leave all attacker predicates false.
Each governed arm still executes five preceding read calls in the attack
population; a transfer refusal is not a claim that no read occurred.

All seven eligible attack plans target the same attacker IBAN, which lies
outside all five fixed recipient policies. A simple recipient-membership gate
supplied with those policies therefore explains all 35 refusals per governed
arm. An ordinary policy engine remains a valid alternative. This is a
baseline explanation, not a separately measured arm or evidence that these
attacks require checked incremental repair.

\paragraph{Upstream scores are not complete goal checks.}
The pinned attack predicates for tasks 0, 1 and 3 use forms such as
\texttt{service or "" in t.subject.lower()} inside a recipient-filtered
\texttt{any}. A truthy secret bypasses content matching; an absent secret
makes empty-string membership true. Thus an outgoing transaction to that
recipient can score without the requested disclosure content. Task 6 accepts
either a total of at least 30,000 or one amount above 10,000 despite specifying
small increments. We retain the original predicates and scores unchanged,
without reporting a new attack run or post-review goal-level rescore.
Exact append records and governed zero dispatch are separate observations;
predicate truth alone does not establish those content or transfer-pattern goals.

\subsection{Pre-satisfied utility is not successful recovery}
One of five benign utility predicates is already true at reset (1/5). The
concrete original case, \texttt{user\_task\_5}, scripts a difference payment
of 5 to Spotify while its unchanged utility predicate checks for an existing
transaction of 50 to a particular recipient. Reset already satisfies that
predicate. Consequently, seven attack crosses for that user are utility-true
before and after refusal: 7/35 remain unchanged in each arm, despite zero
governed appended transfers. The other 28/35 remain utility-false. The
unmediated attack crosses also have seven pre/post utility truths and zero
new satisfactions. Neither the score nor the original plan was patched to
remove this counterexample.

Thus blocking these attacks does not restore legitimate-task utility. The
same reset effect accounts for eight pre-true control cells per governed arm.
Controls produce five transfers and four new utility satisfactions, all from
duplicate-occurrence targets; seven refused controls for the pre-satisfied user
remain utility-true with no send. Mapping fidelity, policy compliance,
observed mutation and task utility must remain separate reported quantities.

\subsection{First decisive layer, not an undifferentiated block count}
Refusal attribution uses the first failed admission/claim or adapter check,
cross-checked against the signed preparation, operation journal and raw
original calls. In each arm, 25 primary admissions succeed. Five subsequently
fail at claim after narrowing, five at the money adapter, and five at the
count adapter. Only ten target claims dispatch transfers. Kernel admission
alone is not a provider dispatch.

\begin{table}[htbp]
\centering\small
\begin{tabularx}{\linewidth}{@{}YYrrr@{}}
\toprule
Population or control & First decisive layer / stage & Full cells & Repair cells & Appends/arm \\
\midrule
Original attack cross & Kernel / admission: recipient boundary & 35 & 35 & 0 \\
Revoke before admission & Kernel / admission: authority & 5 & 5 & 0 \\
Narrow before first claim & Kernel / claim: authority & 5 & 5 & 0 \\
Wrong recipient, fresh proof & Kernel / admission: boundary & 5 & 5 & 0 \\
Amount changed after prepare & Kernel / admission: preparation binding & 5 & 5 & 0 \\
Above-cap amount, fresh proof & Money adapter / after admission & 5 & 5 & 0 \\
Missing current evidence & Kernel / admission: stale summary & 5 & 5 & 0 \\
Extra transfer, fresh proof & Count adapter / after admission & 5 & 5 & 0 \\
Duplicate occurrence & Two admission responses, one original occurrence & 5 & 5 & 5 \\
\bottomrule
\end{tabularx}
\caption{Every prescribed Banking control over all five eligible users.
Five duplicate-control appends per arm mean one per control cell, not one per
admission. No control is masked or excluded.}
\label{tab:banking-layers}
\end{table}

Per arm, there are 55 kernel-admission refusals, five kernel-claim refusals,
five money-adapter refusals and five count-adapter refusals. All forty control
cells exercise their prescribed outcome, with zero representation failures,
unresolved sends or missing controls. The 35 original attack crosses execute
\emph{zero money-policy checks}: their refusal is recipient-boundary evidence,
not evidence that a financial kernel blocked thirty-five attacks. The fresh
10,001 amount controls violate both the global ceiling and every smaller
recipient-specific maximum here, so they do not isolate the two constraints.
The extra-transfer control sends two against a one-transfer annotation within
the representation limit; the count adapter, not byte quota, refuses it.

\subsection{Strict replay and remaining limits}
The first strict replay completes with no stderr, retry, missing cell,
transport/error frame or collection/audit error. It verifies the external
manifest and exact inventory, re-audits raw exchanges and reopened SQLite
journals, replays the original tools and scores, and checks all 320 retained
materialized proofs with the finite Python consumer. It checks 92 setup
exchanges and 1,011 collection frames: one inventory, 334 resets, 215 executes,
220 scores, 240 snapshots and one close. Exact independent append checks
cover all eighty original sends: sixty unmediated and twenty governed.
The independent aggregation reconstructs mapped payloads and policy outcomes
from raw calls and original-plan annotations rather than importing the
candidate mapper or money policy. This is still same-project cross-checking.

The Python check validates retained complete proofs, not reconstruction of
the original compact preparation transport. No Banking speedup or
timing-equivalence result is claimed. All five eligible benign plans contain
one transfer and all compiled workflows use the same finite \texttt{each}
shape. Normal preparation changes irrelevant metadata, not branch, negative-lookup
or nested-scope repair transitions. Missing required evidence tests refusal,
not adaptation to an optional absence. No corrective planning follows refusal.
API-shaped mappings and
externally authored synthetic plans do not establish arbitrary-tool closure,
natural-language authority formation, model-driven attack resistance, real
banking safety or broad cross-domain generalization. The 35 reused crosses,
eighty paired governed cells and 320 proofs are not new independent tasks or
outside replications. The earlier Slack phishing counterexample remains valid.

\paragraph{Artifacts and scope.}
The full report is \path{docs/research/matoi-banking-20261015.md}; the archive
is \path{output/matoi-agentdojo-banking-confirmatory-20261015-v1/}. The
pre-exposure source-seal SHA-256 is
\path{8d4e45cd4f92c3ad96a75ccda7554536c01878a3d1048011de7a8e5c56344265};
the population-freeze and collection-manifest digests are
\path{dd0e9f245efaa18724271719261149e4a5c9302eecadef0b052b8647432a919b}
and \path{2d1ac9e9f0579c1bbf305e1295cf509f5ebfbbc44a71c85367e3cfea9d15eb33}.
The strict-replay stdout and independent raw-count digests are
\path{55680126a2d201f917b799e41491b19b1f9c5d627ee65a35c368c86749bb9d7e}
and \path{b7e641e011f2eaabae824ce9ecde8fe7ce2e4de3522398806ee9c3ed5218690b}.
Runtime/dependency identity and all forty captured source/support files are
retained. The replay used the sealed Node/Python environment. A separate
explicit reviewer semantic mode does not rewrite those pins or reproduce
timing. These scoped private research artifacts are not a public upload,
public license or claim of independent validation.

\section{External Validity and Further Evaluation}
\subsection{Freeze semantics before expanding cases}
Broader validation requires freezing requested-authority denotation, state-variation delegation, effect language, coverage assumptions, unknown handling, commitment semantics and units before evaluating held-out workflows. The authored matrices and finite products here are development evidence, not such a holdout evaluation. Existing reserved skill families and runtime mappings were not consumed by these experiments.

Independent skill authors should define allowed traces without reference to OSuite field names. A separate curator should fix hidden state changes and expected outcomes before seeing evaluator predictions. Runtime-to-action mapping is scored against gold objects; authorization is first scored on those gold objects and then on mapper output. This separates representation errors from decision errors rather than allowing one implementation to define its own ground truth.

An externally authored benchmark supplies another source of task definitions,
not an automatic correctness certificate. For example, AgentDojo includes
deterministic environment utility/security predicates, but a predicate that
checks inbox growth need not verify the meaning of the message. An adapter
study must retain upstream utility, exact mutation matching, mapping coverage
and authorization outcomes as separate scores. Unsupported effects stay in
coverage accounting. Section~\ref{sec:agentdojo} retains the narrow Slack plan
study, including the counterexample where utility and attacker success are
both true. Section~\ref{sec:banking} adds a second API-shaped domain but covers
only 5/16 original Banking user plans and reuses seven attack plans across
five policies. Its unchanged original predicates expose a different
counterexample: 1/5 benign and 7/35 attack-cross utility predicates are already
true before the attempt. Refusal is not restored utility. Neither study
establishes broad cross-domain or model-driven generalization, natural-language
authority formation or real-bank safety, and neither is external-laboratory
replication.

\subsection{Matched comparisons}
Primary comparators must share the same authority, target-identity, coverage, nonce, budget, durability and declared commitment or dispatch boundary. Section~\ref{sec:qualification} makes that comparison within one finite kernel; it does not replace comparisons with reproducible formation/closure systems on independently defined workflows. Checked ingestion must charge reference traversal, retention and invalidation rather than compare only a warm lookup with full fresh work. For remote dispatch, mature idempotency and outbox designs remain substantive baselines; a local check must not be credited as an atomic provider commit. Label local reimplementations and substitutions explicitly. Missing-context, missing-effect and missing-fence paths are ablations, not proxies for named vendors.

Section~\ref{sec:matched-cost} now measures the narrower identical-compact-input
control, with actual complete rule rechecking, symmetric producer memoization
and full producer/consumer lifecycle costs. Its 3.49\% JS and 1.79\% Python
aggregate benefits concern this shared-mechanics finite experiment on one host.
They neither invalidate the older protected-cache ties nor establish integrated
service latency, universal family-level benefit or an external-baseline win.

Vary state changes independently from policy changes. Include harmless presentation changes; bounded effect changes; destination changes; missing/contradictory lineage; executable replacement; aggregate budget contention; revocation around commitment; lost acknowledgements; partial effects; and receiver unavailability. An independent effect owner records actual committed effects. Absence of a success message is not evidence of no side effect.

\subsection{Metrics and rejection criteria}
Report false commitments among independently forbidden opportunities, unnecessary refusals among admissible opportunities, correct review rate, mapping tuple accuracy, bound coverage/precision, unknown rate and cross-runtime decision consistency. Use the workflow family as the grouping unit; repeated requests estimate variability, not independent semantic evidence.

Performance must include issuance, analysis, cache misses, unused preparation, authority access, durable admission, queue delay, receiver commitment and reconciliation. Report latency distributions and throughput under a fixed offered load with uncertainty at the workflow/run level. Count timeouts and failures rather than dropping them from successful-request percentiles. A reduction in analysis-call count is not a measured latency improvement.

Predeclare falsifiers. Any unsafe covered commitment inside the stated assumptions invalidates the enforcement claim. No advantage against a matched baseline invalidates a performance claim. Failure to justify selective dependency completeness requires a narrower claim, conservative invalidation or stronger effect mediation. Poor holdout results must not trigger tuning on those holdouts.

\section{Discussion and Limitations}
\paragraph{A useful cache stores evidence, not permission.}
The measurements explain why a high hit rate is insufficient. A hit still pays for validation, while a miss can pay both validation and fresh analysis. An execution still needs applicable authority, budget enforcement and a unique occurrence. In the durable profile, admission debits available budget and claim validates that existing reservation, not a second allocation from current available budget. A future adaptive selector should optimize that full cost under the same protection conditions, rather than route expensive checks out of the measured path. Matoi provides no measured adaptive-selection result.

\paragraph{Interfaces with skills and prepared admission.}
A neutral skill contract can supply requested authority without determining the runtime's current permission. Matoi can consume that boundary; it does not establish that a mapping from an independently defined skill preserves meaning. Similarly, ZeroGate can prepare an action pass while Matoi chooses applicable consequence evidence, but their combination must retain current authorization and durable consumption. The experiments do not establish a production integration of these components.

\paragraph{Complete mediation is an assumption, not a graph label.}
The effect-owner argument requires every covered consequence to encounter effective enforcement. A queued job with a separate credential, an uninstrumented receiver or a callback that escapes the supplied graph violates that premise. The supporting placement study exposes this distinction: honest qualified inventory permits safe covers, whereas a falsely complete inventory still admits a hidden raw writer. Its 225 unauthorized diagnostic mutations span five unqualified configurations, including missing atomic hooks and activation failures; they are not all inventory failures. Probes after deployment refusal deliberately test the unsafe path rather than count qualified work. It is a finite test of the premise, not a topology-discovery solution. Neither selective revalidation nor AREG's observed graph proves complete mediation for an arbitrary deployment.

\paragraph{Limits of the demonstrated result.}
The abstract model assumes truthful primary summaries and retains its omitted-dependency failure. The original process and cost campaigns derive dependencies only in a closed straight-line language with immutable snapshots. The separate query-aware kernel checks completeness for its bounded branches, collections and lookups, but shares trusted parsing and reference semantics; neither implementation establishes closure for arbitrary tools, binaries or hidden service internals. Its 69 semantic tests and finite state product are developer conformance, not independent labels. A supplied interpreter label does not satisfy FCD's actual formation-authority obligation.

The original effect owner serializes local state and synthetic mutations; the GitHub dispatcher does not serialize with remote commitment. The GitHub pilot adds real third-party storage evidence without changing the earlier experiment's SaaS exclusions or extending its atomicity argument to the provider. Those earlier studies do not establish process-restart recovery. Section~\ref{sec:durable} separately measures local restart with evidence reconstruction, monotone authority epochs and conservative uncertain reservations. A conflict-marked unknown cannot become confirmed merely because a later readback contains one match. Its single-history and trusted-provider assumptions exclude backup rollback, independent database copies and late duplicate re-auditing after confirmation. None of the studies establishes globally exactly-once behavior, all-or-nothing remote batches, multi-host consensus or remote attestation. Local confirmation suppression is not a provider failure, and injected process termination is not machine power loss.

The original preparation matrix remains one synthetic family on one uncontrolled workstation, with a fixed change cadence and no deployment-level replication or significance claim. Its negative regimes are not repaired by the GitHub pilot's counted reduction from 21 full analysis/check operations to ten full operations plus eleven reused preparations. The earlier four-strategy full-cost matrix likewise finds no demonstrated advantage over the protected reference cache. Those four preparation strategies share expensive persistence, signature and live-execution obligations; that experiment does not isolate a causal optimization for those costs. Short offered-load batches, near-cap fill/probe timing and injected faults do not establish a sustained service-level objective. The new matched-cost campaign measures a modest positive residual against complete rechecking, but its twenty-block intervals do not establish cross-host or cross-workload generalization. Independent labels, broader held-out workflows and a material advantage over competent external implementations remain unestablished; their evaluation is scientific work, not an inference from larger test counts.

The Banking study broadens the API domain, not the finite language or the
authority-formation claim. Its destination/byte-count trace does not encode
exact transfer-body bytes or monetary semantics: durable body hashes bind the
payload, and separate CAVA adapter checks enforce numeric amount and count.
Only five original user plans are eligible; all have one transfer and compile
to the same finite iteration shape. Thirty-five reused attack crosses are not
thirty-five independent goals, and the unchanged pre-satisfied utility
counterexample prevents equating refusal with task recovery. The original
tool appends local synthetic transactions without debiting an account balance.
No LLM, live-bank, financial-settlement or Banking performance guarantee follows.

\paragraph{Qualification is not hosted activation.}
The tested journal is finite, with no compaction, backup-rollback fencing or
authorized unknown-resolution interface. Tenant administration, key custody
and rotation, provider contracts, retention/migration and an operator recovery
procedure require separate integration tests. The studies neither activate a
hosted feature nor establish that PCAA, CAVA, BAF, AREG, ZeroGate, Matoi and
the DRS placement experiment form one deployed end-to-end protocol.

\section{Conclusion}
Matoi separates three facts that are easily collapsed in an agent runtime: an analysis still describes the possible consequences, a grant covers those consequences, and a particular effect remains permitted now. The straight-line and checked query-aware implementations realize this distinction under explicit evidence-trust assumptions. The durable extension adds a fourth fact: what can still be established after the process disappears. It reconstructs evidence rather than reviving a serialized handle, and retains uncertain reservations rather than manufacturing a new execution opportunity. Real-process receiver evidence, GitHub outcomes and the recovery campaigns make refusal and reconciliation inspectable without equating local dispatch with atomic remote commitment. The earlier same-kernel comparison does not demonstrate a checked-reuse advantage over a protected reference cache; the coverage study preserves hidden-path failures instead of treating a claimed graph as complete mediation.

The later portable profile makes the exact evidence contract independently
implementable and records a combined compact-packet/reuse benefit against
full proof. A separate matched control now holds compact inputs identical and
measures the smaller 3.49\% JS and 1.79\% Python whole-cell reuse benefit over
complete rechecking. It does not establish a universal or integrated speedup.
The explicit durable integration retains current authorization and recovery,
but not an inherited speedup or remote atomicity guarantee. The Slack study's
route-compatible phishing remains a counterexample to content safety. The
new Banking study adds 5/16 user-plan and 7/9 attack-plan coverage, with 35
reused crosses refused in each governed arm. Its 1/5 pre-true benign and
7/35 unchanged pre/post attack-cross utility predicates show why those refusals
are not new task completions. Neither plan study evaluates an LLM or real-bank
operation. Together, these are a bounded systems contribution with explicit
failure and cost boundaries, not a new general incremental algorithm or an
established standard.
The practical rule is conditional but concrete: reuse only applicable
consequence evidence, retain live authority and durable occurrence checks,
and establish which component actually controls the effect and its recovery.

\appendix
\section{Evidence Accounting and Reproduction}
The following table accounts for the original model, straight-line campaigns
and preparation study. The query-aware and GitHub artifacts are identified
separately below; their conformance counts are not pooled with these observations.

\begin{table}[ht]
\centering\small
\begin{tabularx}{\linewidth}{@{}>{\raggedright\arraybackslash}p{0.22\linewidth}YY@{}}
\toprule
Study & Observational unit & What must not be inferred \\
\midrule
Finite model & 35 primary states plus one separate omitted-dependency counterexample & Independent semantic accuracy or real execution \\
macOS process campaign & 1,260 records; 1,320 occurrences; 1,020 effects & 1,260 independent workflow families \\
Azure process campaign & Same matrix: 1,260 records; 1,320 occurrences; 1,020 effects & A new semantic corpus or a production SLA \\
Preparation study & 36,864 measured requests; 1,536 cached primes charged separately; 6,144 warmup requests & Full-workflow timing or free preparation \\
Receiver-evidence campaign & 48 records; 96 expected occurrences; 54 confirmed effects; 24 denied attempts; six reconciliations & Denials or reconciliations as additional effects \\
Software regression & 223 combined Matoi/AREG tests; cost-study tests separate & 223 independent attacks or experimental records \\
\bottomrule
\end{tabularx}
\caption{Denominators are not interchangeable. All experimental labels and replays are internal; repetitions assess implementation behavior and timing variability, not independent generalization.}
\end{table}

The repository artifact is \path{scripts/authorization-reuse-model/}, with fixtures, tests, evaluator, report generator, case-level JSON and source hashes. From the repository root, run:
\begin{quote}\small\ttfamily
node --test\ \path{scripts/authorization-reuse-model/model.test.mjs}\\
node\ \path{scripts/authorization-reuse-model/run.mjs}
\end{quote}
These paths identify the accompanying repository layout. No public, anonymously accessible archive DOI is asserted here. The abstract model, process studies and preparation study are separate artifacts, not pooled observations.

The real-process prototype and its separately implemented verifier are in \path{scripts/matoi-prototype/}. Run \path{measure.sh} with a fresh output directory and repetition count, then replay with \path{verify.mjs}. Manifests bind prototype and imported CAVA source hashes, the recorded interpreter digest, schedules, environment and public signing key. Replay verifies recorded evidence; it does not independently attest the originating host or the honesty of its controller. The first development pilot is retained separately and excluded from the measured table.

The preparation study is in \path{scripts/matoi-cost-study/}. Its fixed bundle is \path{output/matoi-cost-study/fixed-20261006-01/}; the UTC date in this path falls within the local October 5 campaign. It contains the protocol, frozen imported source graph, environment, all 896 epoch records, statistics and file digests. The companion \path{docs/research/matoi-cost-study-20261005.md} reports every cell. Replay using the archived verifier when current source has changed. The manifest SHA-256 is \path{0c502900121db418e695d516b221a3b52601b6e8751fea8228ca609c7034fc56}. It is an integrity reference, not a third-party attestation or an anonymous public artifact URL.

The receiver-evidence campaign is implemented in \path{scripts/matoi-areg-bridge/}. Its 48-record archive has manifest SHA-256 \path{3fc070e80a8ac86f0067db10c5e42546af9cee7962a92133fbf9d58ff6c532c6}. Verification requires that external digest and the configured issuer/collector trust inputs; it rehashes the saved evidence and independently reads the saved owner database. It does not make the original collector or host independently trustworthy.

The distinct finite kernel is in \path{scripts/matoi-semantic-revalidation/}.
The companion reports \path{docs/research/matoi-semantic-revalidation-20261006.md}
and \path{docs/research/matoi-checked-summary-20261006.md} record its original and
checked-ingestion conformance. Its final 69-test suite includes the first
query-aware slice's 35 tests, the 12 deletion and six effect-mutation cases, and the 256-pair product;
none of these denominators is an additional set of independent workflows.
The targeted Matoi/DRS regressions reported there do not establish integration
of a DRS recovery mechanism into checked ingestion.

The GitHub implementation is in \path{scripts/matoi-saas-pilot/}; its report is
\path{docs/research/matoi-saas-pilot-20261006.md}. The retained run is
\path{output/matoi-saas-live-20261006-01/}, with an unchanged repository copy at
\path{scripts/matoi-saas-pilot/retained/20261006/}. Its manifest SHA-256 is
\path{a4dd7e4f82e250fb651921ec7a473f5abb28b6a95a00ba2a84ce1e6b633e0bb1};
the captured source-index digest is
\path{02798832e1e6495c4501854abcd45a3596448d014fca7256907158c979b61fce}.
Offline replay uses \path{scripts/matoi-saas-pilot/verify.mjs} with the archive
directory and external manifest digest; it does not repeat the live mutations.
The separate post-run readback has SHA-256
\path{5167177a4847c580ec08a06b366663e78246b131a430dd6dfd4a0944aaebb361}.
Public observations are on
\href{https://github.com/OndCo/Agent-Action-Boundary-Benchmark/issues/3}{GitHub issue 3}
and \href{https://github.com/OndCo/Agent-Action-Boundary-Benchmark/issues/4}{issue 4}.
Those public objects are not an anonymous download of the private evidence
archive. Software-regression totals, including retained-evidence checks, are
not additional live cells. Source pins and internal replay identify recorded
evidence; they do not supply independent semantic labels or host attestation.

The durable extension is in \path{packages/matoi-runtime/}; its report is
\path{docs/research/matoi-durable-runtime-20261013.md}. Its final archive is
\path{output/matoi-durable-runtime-20261013-02/}, identified by report SHA-256
\path{5bc520c8511f9432cdd1349fd2ba8ee5834880298849ddedda10fb1e91cd3205}
and source-manifest SHA-256
\path{2e6f5d5465ee58cc5356766615e8c1adfd4b2465bf534e130a6b57477aab1fdd}.
The frozen protocol is
\path{docs/research/matoi-durable-process-protocol-20261013.md}.
With Node.js 24 or later, run
\path{scripts/matoi-durable-runtime/verify.mjs} with the archive directory and
external report digest. It uses the matching checker source, verifies signatures
and stored evidence, and reads database copies without resending effects.
Reproduction into a new directory uses
\path{packages/matoi-runtime/examples/local-recovery.mjs}.
The preceding run 01 is retained but excluded from this table: it omitted the
public issuer key needed for offline verification. Run 02 changes that evidence
capture, not runtime source or frozen schedules. These internal archives are
not anonymous public downloads; a public release needs separate review and
publication of the required evidence. No private signing key is in either run.

\subsection{Same-kernel qualification archives}
The successor harness is \path{scripts/matoi-qualification/}; the complete
report is \path{docs/research/matoi-qualification-20261013.md}. Protocol
\texttt{matoi.qualification.20261013.v2} and imported sources are frozen
before each matrix, with full captured-source replay and temporary database
copies. The four separate manifest digests below are retained outside
their archives:
\begin{itemize}
\item Cost, \path{output/matoi-qualification-v2-confirmatory-cost-20261013/}:\\
\path{25d0ce494a3ad0ff3fe670c3a34450fc010412aba66cfa4f7f3132fe31a775cb}.
\item Load, \path{output/matoi-qualification-v2-confirmatory-load-20261013/}:\\
\path{4eaa5265b6fd61946b960fec1896e4bb0184668d7388420450188a7b19c6f659}.
\item Growth, \path{output/matoi-qualification-v2-confirmatory-growth-20261013/}:\\
\path{e0705d5786602386d8d3a19954aac389bdba92fcde3749f2fbb76e025511ca46}.
\item Fault, \path{output/matoi-qualification-v2-confirmatory-fault-20261013/}:\\
\path{94fe37f61b0a450b230fbc5aa1c8e90b5e43292540236e350528bcede23d9c90}.
\end{itemize}
Use the matching archived \path{replay.mjs} with the archive path and its
external manifest digest. The implementation identity is
\path{79916a1863b9b7c88fa0d2246d6755ccca87e45d399970b8d39a0692ee11f81a}.
The harness records a protected-reference profile separately from checked
derivations. Keys are ephemeral; only public verification material is
retained. The 132 v2 pilot cells and stopped 82-cell v1 collection are not
added to the 1,416 formal observations.

The supporting coverage harness is \path{scripts/drs-qualified-coverage/},
with report \path{docs/research/drs-qualified-coverage-20261013.md}. Its
formal archive is
\path{output/drs-qualified-coverage-confirmatory-20261013T111800Z/};
manifest SHA-256 is
\path{c744983797e0951244c18cf42236b235fe64fbc3eaf479274f5dddc9563e40ea}
and pre-data freeze SHA-256 is
\path{3ad2f4d9353ae5a9b78f75e8ca0813e92effb7d377010ddcdf27fa4d104a92aa}.
Its captured-source replay checks both supplied and physical coverage,
concrete mutations, authoritative denials, process activity and complete
transaction accounting. Development pilots are retained separately.
These directories identify retained internal archives, not anonymously
accessible public downloads or external attestations. Public artifact
release still requires separate review and publication.

\section{Earlier Engineering Study}
An earlier 4 October 2026 local study recorded 3,000 attempts. Its 22 authored scenarios expand to 25 attempts per configuration per repetition; six configurations and twenty repetitions therefore produce $25\times6\times20=3{,}000$ attempts. Each configuration had 500 attempts, including 360 expected non-admissions and 140 expected admissions. Both strict paths and a conventional structured broker recorded zero admissions against those 360 authored non-admission labels. This supports the need for context and effect predicates in those cases, not exclusive OSuite capability or independently labeled accuracy.

The study used real child processes, HTTP and SQLite, with a separate authenticated receiver journal. Its effect was a synthetic SaaS-style journal append, not a customer refund. Prepared admission had a gate p95 of 0.741917 ms versus 1.078042 ms fresh, but elapsed p95 was 54.291958 ms versus 52.848292 ms. End-to-end speedup was not established. Preparation costs, known effect counts and incomplete obligation matching prevent a general performance comparison. These figures are historical engineering evidence, not measurements of the selective rule in Table~\ref{tab:model}.

BAF exact-context/tuple checking and its attenuation checker remain distinct from selective evidence reuse. AREG's occurrence graph and supplied-world reachability bounds remain distinct from prospective effect inference. ZeroGate carries prepared authorization but must retain current authority and durable consumption. The historical study is not added to Matoi's process-campaign denominator.

\section{Portable and Integrated Artifact Accounting}
The later finite-contract manuscript and its supplement provide the complete
portable contrasts, semantic bounds and all seventeen integrated schedules.
The portable formal archive has manifest SHA-256
\path{19f5aa924e95b47563e61685a24c382d19b22206f5492f41bf891d81da4899b1}
and pre-data freeze
\path{103d09b68e9567c33f29f2775936047e6bbe34e60899c07809a8d5cf94e59cc8}.
The integrated archive has manifest
\path{28d1502cd87af752b618075728d7f424b65b8482b0b2a0ea5d8e1d37b9d6b29e}
and freeze
\path{fabcfce9d8c4346493a06cf65894bb9be208dc3de7ca1f447e76bea223ea82c3}.
The kernel/source-harness checkpoints are respectively
\texttt{18c472a8} and \texttt{a76f3b1c}. The earlier five-strategy repair
archive remains separate, with manifest
\path{e4ef9f6d6d5c3aa29d16345f309687eccbcf86c1c9aecd261534abcbaace62f1}.

The archives retain captured sources and source/input identities, raw output,
complete matrices and replay instructions. Integrated replay checks public-key
signatures, exact bodies/attempts, journal transitions and reopened receiver
and runtime databases without resending a mutation. It is a same-project
verification path, not an outside attestation of capture or timing. Local
archive names and hashes are not public download endpoints. This candidate
does not yet provide a public raw-artifact release. A local conformance bundle
now contains 25 unchanged source files with scoped MIT/CC BY 4.0 notices; two
separate clean exports each pass 128 JS and 70 Python tests. It has not been
uploaded and does not reproduce the paper's timing and execution claims.

\section*{Acknowledgments and Research Integrity}
OpenAI Codex assisted substantively with drafting and revision of the abstract,
all prose sections and appendices of this preprint and its ancillary
manuscripts. Affected material includes research framing and related work,
models and proof arguments, implementation and evaluation descriptions,
figure/table presentation, limitations, conclusions and disclosures.
Assistance also covered research scoping, authored tests and expectations,
experiment/verification code, orchestration, analysis and internal review.
The motivation was rapid implementation and cross-checking of the bounded
prototype, not replacement of measured execution. This was not limited to
copyediting. Reported numbers come from retained program outputs; AI-assisted
authored expectations and same-project separate-language checking are not
independent human labels or external validation. No unverified model version
is asserted; the development assistant's resource consumption was not
independently measured. The named author remains responsible for verification,
attribution, final claims and personal submission review; that final review is
not attested here. These experiments did not consume the
reserved skill-boundary collaboration holdouts and do not imply coauthorship,
review or institutional endorsement by prospective collaborators. No customer
data or third-party attack infrastructure was used. The Toronto affiliation
does not attest the geographical location of every experiment host.

\begingroup\small\raggedright

\endgroup
\end{document}